\newif\ifArxivVersion
\title{Random Walks on Polytopes of Constant Corank}
\author{Malte Milatz}{ETH Zürich\\Switzerland}{mmilatz@inf.ethz.ch}{}{}
\authorrunning{M.~Milatz}
\subjclass{
    \ccsdesc[500]{Theory of computation~Randomness, geometry and discrete structures},
    \ccsdesc[300]{Theory of computation~Linear programming}
}
\keywords{polytope, unique sink orientation, grid, random walk}
\theoremstyle{plain}
    \newcommand{\NN}{\mathbb{N}}
    \newcommand{\RR}{\mathbb{R}}
    \newcommand{\Cc}{\mathcal{C}}
    \newcommand{\Ee}{\mathcal{E}}
    \newcommand{\Ll}{\mathcal{L}}
    \newcommand{\Rr}{\mathcal{R}}
    \newcommand{\one}{\mathbf{1}}
    \newcommand{\zero}{\mathbf{0}}
    \newcommand{\ee}{\mathbf{e}}
    \newcommand{\T}{\mathrm{T}}
    \DeclareMathOperator{\E}{E}
    \newcommand{\aff}{\mathrm{aff}}
    \newcommand{\phase}{\mathrm{phase}}
    \newcommand{\conv}{\operatorname{conv}}
        \newcommand*{\defeq}
            {\mathrel{\rlap{\raisebox{0.3ex}
            {$\m@th\cdot$}}\raisebox{-0.3ex}{$\m@th\cdot$}}=}
    \newcommand{\ex}[1]{\E\bigl[#1\bigr]}
    \newcommand{\pr}[1]{\Pr\bigl[#1\bigr]}
    \newcommand{\Given}{\,\big|\,}
    \newcommand{\pmat}[1]{\begin{pmatrix}#1\end{pmatrix}}
    \newcommand{\chooseTwo}[1]{{\textstyle\binom{#1}{2}}}
    \newcommand{\sentenceDash}{\:\textemdash\:}
    \newcommand{\midSentenceDash}{\,\textemdash\,}
\begin{document}

\maketitle

\begin{abstract}
    We show that the pivoting process associated with one line and $n$ points
    in $r$-dimensional space may need $\Omega(\log^r n)$ steps in expectation
    as $n \to \infty$.
    The only cases for which the bound was known previously were for $r \le 3$.
    Our lower bound is also valid for the expected number of pivoting steps in
    the following applications:
    (1)~The \textsc{Random-Edge} simplex algorithm on linear programs
    with $n$ constraints in $d = n-r$ variables; and
    (2)~the directed random walk on a grid polytope of corank $r$
    with $n$ facets.
\end{abstract}

\begin{figure}[b]
    \ifArxivVersion
        \vspace{17pt} 
    \fi
    \begin{minipage}[t]{0.2\textwidth}
        \vspace{0pt}
        \begin{subfigure}{\textwidth}
            \centering
            \includegraphics[height=6em]{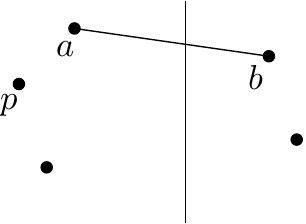}
            \vspace{0pt}
            \caption{} 
            \label{Figure: one line and n points}
        \end{subfigure}
    \end{minipage}
    \hspace{0.1\textwidth}
    \begin{minipage}[t]{0.25\textwidth}
        \vspace{0pt}
        \begin{subfigure}{\textwidth}
            \centering
            \includegraphics[height=6em]{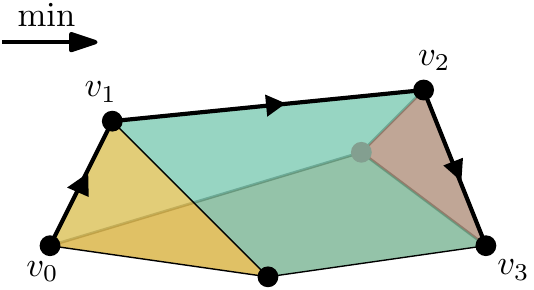}
            \vspace{0pt}
            \caption{}
            \label{Figure: polytope}
        \end{subfigure}
    \end{minipage}
    \hfill
    \begin{minipage}[t]{0.3\textwidth}
        \vspace{0pt} 
        \caption{\textbf{(a)} An instance of ``one line and $n$ points''.
            \textbf{(b)} A possible path that the directed random walk might
            take on the corresponding polytope.}
    \end{minipage}
\end{figure}

\section{Introduction}
\label{Section: introduction}

In 2001 Gärtner et al.~\cite{GaertnerSTVW'03} introduced a random process
involving a vertical line and $n$ points in the plane (see Figure~\ref{Figure:
one line and n points}) which they called ``the fast process''
or just ``one line and $n$ points''.
The starting position of this process is a pair $\{a,b\}$ of points that lie
on opposite sides of the vertical line.
In each step the process picks a point $p$ (a pivot) uniformly at random from
the points that lie below the non-vertical line $ab$.
The subsequent position is then the unique pair $\{p,q\}$ with $q \in \{a,b\}$
such that $p$ and $q$ lie again on opposite sides of the vertical line.
For example, in Figure~\ref{Figure: one line and n points} the next position
would be $\{p,b\}$.
The authors of \cite{GaertnerSTVW'03} gave matching upper and lower bounds of
order $\log^2(n)$ for the expected duration of this process in the plane.

The process generalizes naturally to higher dimensions.
However, understanding its behavior in any dimension other than $2$ has remained a wide
open problem, with the notable exception of an $\Omega(\log^3 n)$ lower bound
in three dimensions \cite{Tschirschnitz'03}.
As the dimension grows, the situation indeed becomes increasingly complicated,
and the question has seen no further improvements for the subsequent fifteen
years.

The relevance of the generalized process lies in its intimate connection with
the \textsc{Random-Edge} simplex algorithm for linear programming,
which has already been widely considered before;
for example cf.~\cite{BroderDFRU'95,FelsnerGT'05,GaertnerHZ'98,Hoke'88,MatousekSz'06}.
\textsc{Random-Edge} is naturally formulated in terms of a
random walk on the vertex-edge graph of the feasible region
(see Figure~\ref{Figure: polytope}).
To be precise it is a \emph{directed} random walk, because every edge may be
used in one prescribed direction only: the direction along which the objective
function improves.
In each step the directed random walk moves from the current vertex to a vertex
chosen uniformly at random from all neighbors with smaller objective value.
By means of the so-called \emph{extended Gale transform}, the directed
random walk on a $d$-polytope with $n$ facets translates precisely into the
process of ``one line and $n$ points'' in $\RR^{r}$, where $r \defeq n-d$
denotes the \emph{corank} of the polytope.
For lack of space we refer the interested reader to the appendix of
\cite{GaertnerSTVW'03} for a complete exposition of the extended Gale
transform.

The interpretation in terms of polytopes also explains the difficulties that
arise when analyzing the process for $r \ge 3$.
Namely, it is known that every simple polytope of corank $r=1$ is a simplex,
and every simple polytope of corank $r=2$ is a product of two simplices; these
situations are thus well-understood, classified, and not too complicated.
Already for $r=3$, however, the classification is considerably more involved
(cf.~chapter 6.3 in \cite{Gruenbaum'03}), and for $r \ge 4$ no similar
classification exists.

The name ``corank'' is not entirely standard; it has been used e.g.~in
\cite{PfeifleZ'04}.
Despite its anonymity it plays a prominent role in polytope theory: The
\emph{Hirsch conjecture} once stated that the corank might be an upper bound
on the diameter of any polytope.
Since the Hirsch conjecture, in its strict form, has been disproved by Santos
\cite{Santos'12}, the search for a close connection between these quantities
continues.
Indeed we believe it fruitful to analyze algorithms for linear
programming in a setting where the corank is assumed to be small, i.e., a
constant or a slowly growing function of $n$.
A positive result of this type is for example Kalai's \textsc{Random-Facet}
algorithm:
His upper bound in \cite{Kalai'92} for the expected number of arithmetic
operations becomes polynomial if the corank is taken to be of order
$r = O(\log n)$.
In contrast, \textsc{Random-Edge} has proved to be notoriously hard to
analyze, and tight bounds are rare when we want to understand the behavior of
a given simplex algorithm on a complete class of instances.
The analysis in \cite{KaibelMSZ'05} for $d=3$ suggests that
\textsc{Random-Edge} can be a good choice in low dimension;
the present paper takes the dual viewpoint, fixing the corank rather than the
dimension.
Note that the mildly exponential lower bound obtained by Friedmann et
al.~\cite{FriedmannHZ'11} does not pose any restrictions on the corank.

A polytope with $n$ facets and constant corank $r$ has $O(n^r)$ vertices, and
this bound is tight; this follows for example from McMullen's theorem \cite{McMullen'71}.
Thus, $O(n^r)$ is a trivial bound for the number of \textsc{Random-Edge}
pivoting steps.
The known bounds for $r=2$ suggest that the process outperforms the trivial
bound considerably; however, it is not at all clear what can be expected in
general.
It is conceivable that a bound of $O(\log^r n)$ holds, although we
are currently missing the mathematical techniques and insights to prove this.
In this paper, we prove that one can at least not do better.

The history of lower-bound constructions for specific LP-solving algorithms
shows that it is often considerably easier to prove bounds in an
abstract model; this is how \emph{unique sink orientations} (USOs) enter the picture.
The same principle applies to the present paper:
We first present a construction in the USO model in Section~\ref{Section:
prelude} and strengthen the result to the geometric setting in the rest of the
paper.
Somewhat atypically, the main ideas underlying the geometric construction
are not \emph{entirely} different from the ideas underlying the USO construction.

\subparagraph{Our results.} We prove that the random process of
``one line and $n$ points'' in $\RR^r$ may take $\Omega(\log^r n)$ steps in
expectation.
This generalizes previous constructions for the cases $r=2,3$
\cite{GaertnerSTVW'03, Tschirschnitz'03}.
The exact bound that we obtain is specified in Theorem~\ref{Theorem: geometric
lower bound}.
Using the extended Gale transform, our result can be rephrased in different
settings, as in the following theorems.
Theorem~\ref{Theorem: linear programs} rephrases our result in the language of
linear programs.
Theorem~\ref{Theorem: polytopes} uses instead the language of polytopes.
The combinatorial type of the polytopes that we construct in this way is
rather special: they are \emph{grid polytopes}, i.e., Cartesian products of
simplices.

\begin{theorem}
    \label{Theorem: linear programs}
    Let $r \in \NN_0$ be a fixed parameter.
    There are linear programs in $d$ variables with $n = d+r$ constraints
    on which the \textsc{Random-Edge} algorithm needs $\Omega(\log^r n)$
    pivoting steps in expectation as $n \to \infty$.
\end{theorem}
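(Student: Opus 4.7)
The plan is to derive Theorem~\ref{Theorem: linear programs} directly from Theorem~\ref{Theorem: geometric lower bound} by invoking the extended Gale transform, whose properties I take as given from the appendix of~\cite{GaertnerSTVW'03}. Concretely, I take the configuration of $n$ points and one vertical line in $\RR^r$ that witnesses the geometric lower bound and feed it into the reverse direction of the extended Gale transform. The output is a linear program in $d = n - r$ variables with $n$ constraints on which \textsc{Random-Edge} simulates the fast process step for step, so that the expected number of pivoting steps inherits the geometric bound.

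First I would recall the correspondence supplied by the extended Gale transform. Up to genericity, configurations consisting of $n$ points together with a designated vertical line in $\RR^r$ are in bijection with simple $d$-polytopes with $n$ facets, where $d = n - r$, equipped with a linear objective. Under this bijection, positions $\{a,b\}$ of the fast process correspond to vertices of the polytope; picking a uniformly random eligible pivot $p$ on the geometric side corresponds precisely to picking a uniformly random improving neighbor on the polytope side; and the resulting transition $\{a,b\}\to\{p,q\}$ matches traversal of the corresponding edge. This translation is standard and is borrowed verbatim from~\cite{GaertnerSTVW'03}.

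Second, I would apply this correspondence to the particular configuration furnished by Theorem~\ref{Theorem: geometric lower bound}, producing a linear program of the required size together with a linear objective. By the identification above, the expected number of \textsc{Random-Edge} pivoting steps on this linear program equals the expected duration of the fast process on the underlying geometric instance, which by Theorem~\ref{Theorem: geometric lower bound} is $\Omega(\log^r n)$ as $n \to \infty$.

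The only genuine subtlety, and thus the main potential obstacle, is to check that the geometric construction can be chosen generically enough for the reverse Gale transform to yield a simple, bounded, non-degenerate polytope: no point should lie on the vertical line, the points should be in sufficiently general position, and $\zero$ should lie in the interior of their convex hull. If the explicit construction used to establish Theorem~\ref{Theorem: geometric lower bound} does not already satisfy these conditions, a small generic perturbation of the point positions suffices; since the combinatorial behavior of the fast process is determined by finitely many strict inequalities among coordinates, it is stable under sufficiently small perturbations, so the $\Omega(\log^r n)$ bound is preserved. Apart from this perturbation bookkeeping, no new combinatorial or probabilistic argument is needed beyond what is provided by the geometric theorem.
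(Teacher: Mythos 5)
Your proposal is correct and mirrors the paper's own treatment: the paper states Theorems~\ref{Theorem: linear programs} and~\ref{Theorem: polytopes} as rephrasings of Theorem~\ref{Theorem: geometric lower bound} via the extended Gale transform and defers the details of that correspondence to the appendix of~\cite{GaertnerSTVW'03}, exactly as you do. One small remark: the perturbation fallback you sketch at the end is not actually needed here, since Lemma~\ref{Lemma: non-degenerate} already establishes the required general position for the explicit construction $A(r,m)$ (and the requirement line is the diagonal $\RR\one_r$ rather than a literal vertical line, which is immaterial); your stability argument is sound but superfluous for this construction.
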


\begin{theorem}
    \label{Theorem: polytopes}
    Let $r \in \NN_0$.
    There are grid $d$-polytopes with $n = d+r$ facets on which
    the directed random walk (with the direction specified by a linear
    function) needs $\Omega(\log^r n)$ steps in
    expectation as $n \to \infty$.
\end{theorem}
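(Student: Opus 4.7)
The plan is to derive Theorem~\ref{Theorem: polytopes} from Theorem~\ref{Theorem: geometric lower bound} via the extended Gale transform, in the same spirit as Theorem~\ref{Theorem: linear programs} but with the added requirement that the resulting polytope be a product of simplices. First, I would invoke Theorem~\ref{Theorem: geometric lower bound} to produce an $n$-point configuration with one vertical line in $\RR^r$ on which the fast process has expected duration $\Omega(\log^r n)$. Second, I would apply the extended Gale transform of \cite{GaertnerSTVW'03} to this configuration, obtaining a simple $d$-polytope $P$ with $n = d+r$ facets together with a linear objective function $f$. By the defining property of the transform, runs of the fast process on the point configuration correspond bijectively to runs of the directed random walk on $(P,f)$, so the expected running times agree.

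The one step beyond Theorem~\ref{Theorem: linear programs} is to verify that $P$ is combinatorially a Cartesian product of simplices $\Delta^{k_1} \times \cdots \times \Delta^{k_r}$, with $\sum_i (k_i + 1) = n$ and $\sum_i k_i = d$. Under Gale duality, such a product corresponds to point configurations in which the $n$ points split into $r$ groups of sizes $k_1 + 1, \ldots, k_r + 1$, each group being highly structured (its points coincide in the classical Gale diagram, or in the extended setting lie on a common vertical line in $\RR^r$), while the groups themselves are in generic relative position. I would therefore arrange the construction underlying Theorem~\ref{Theorem: geometric lower bound} so that its $n$ points naturally exhibit this partition, with each group placed on its own vertical line in the configuration of ``one line and $n$ points''; the dual polytope is then automatically a grid polytope, and the objective $f$ coming out of the transform is precisely the linear function induced by the vertical line.

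The main obstacle is realizing this product structure without weakening the lower bound. If the construction used for Theorem~\ref{Theorem: geometric lower bound} already proceeds by grouping the points into clusters whose fine-grained within-cluster positions are irrelevant to the probabilistic analysis, then choosing each cluster to be a single vertical ``column'' of points makes the grid-polytope structure appear for free. Otherwise, I would argue by perturbation: the fast process depends only on the combinatorial type of the point/line configuration, so a sufficiently small movement of each group onto its vertical column preserves the relevant order types and hence the expected number of steps. Either way, the $\Omega(\log^r n)$ lower bound carries over to $(P,f)$, and Theorem~\ref{Theorem: polytopes} follows.
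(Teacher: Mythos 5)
Your overall strategy\,---\,applying the extended Gale transform to the point configuration from Theorem~\ref{Theorem: geometric lower bound} and then checking that the resulting polytope is a product of simplices\,---\,is the right one, and you correctly identify the grid-polytope verification as the only step that goes beyond Theorem~\ref{Theorem: linear programs}. However, the criterion you propose for that step does not match the paper's construction, and neither of your fallbacks repairs it.

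Under the extended Gale transform, a simple $d$-polytope with $n = d+r$ facets is combinatorially $\Delta^{k_1} \times \cdots \times \Delta^{k_r}$ if and only if the $n$ points admit a partition into classes $\Cc_1,\dots,\Cc_r$ (of sizes $k_i+1$) such that the pierced simplices are \emph{exactly the transversals} of that partition. That is a purely combinatorial condition on the piercing structure; it does \emph{not} require the points of a class to be coincident or to lie on a common line parallel to the requirement line. The construction $A(r,m)$ satisfies the transversal condition via Lemmas~\ref{Lemma: colors} and~\ref{Lemma: pierced}, and this is all the paper uses to conclude that the dual polytope is a grid polytope. But the color class $\Cc_i$ consists of all $a_{i,j,k}$ with $j \ge i$, and these points spread out over coordinates $i$ and $j+1,\dots,r$\,---\,they are emphatically not on a line parallel to $u = \one_r$. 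So the construction is not already in the ``vertical column'' form you describe, and your perturbation branch also fails: moving each $\Cc_i$ onto a $u$-parallel line would collapse the layer structure (the $-m^{2r-2j+1},\dots,-m^3$ blocks) on which the phase analysis and the $\Omega(\log^r n)$ bound depend, which is a large deformation, not an order-type-preserving one. Replacing your collinearity criterion with the transversal criterion, verified by Lemmas~\ref{Lemma: colors} and~\ref{Lemma: pierced}, closes the gap and makes the rest of your argument go through.
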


\section{Prelude: Walks on grids}
\label{Section: prelude}

In this section we prove a lower bound of order $\log^r(n)$ for the expected
duration of a random walk on a certain class of directed graphs, namely
\emph{unique sink orientations of grids}.
The construction does not involve any geometry and should be simpler to
understand than the point-set construction in Section~\ref{Section:
construction}.

Given a directed (multi-)graph $G$ and a vertex $v_0$ of $G$, the
\emph{directed random walk} is the random process $v_0,v_1,v_2,\dots$
described as follows:
If the current position is $v_i$, choose one of the outgoing edges at
$v_i$ uniformly at random, and let $v_{i+1}$ be the other endpoint of that
edge.
The process terminates when (and if) it reaches a sink.
The random variable
\begin{align*}
    T(G,v_0) =
    \min \{ t \,:\, v_t \text{ is a sink} \}
\end{align*}
will denote the duration of the directed
random walk on $G$ starting in $v_0$.
We will abbreviate $T(G) := T(G,v_0)$ for a starting position $v_0$ chosen
uniformly at random from the set of vertices.

Following the terminology in \cite{GaertnerMR'08}, a \emph{grid} is a
Cartesian product of a (finite) number of (finite) complete graphs
$G_1,\dots,G_r$.
Explicitly, the vertex set of this graph is $V(G_1) \times \dots \times
V(G_r)$, and two vertices $(u_1,\dots,u_r)$ and $(v_1,\dots,v_r)$ are joined
by an edge if and only if they differ in exactly one coordinate.
We could have defined grids equivalently as the vertex-edge graphs of grid
polytopes.
The number $r$ is usually called the \emph{dimension} of the grid (this is
\emph{not} the dimension of the underlying grid polytope!); and its
\emph{size} is the number $|V(G_1)| + \dots + |V(G_r)|$.

A \emph{subgrid} of a grid is an induced subgraph on a set of the form
$U_1 \times \dots \times U_r$ with $U_i \subseteq V(G_i)$.
Finally, a \emph{unique sink orientation} of a grid is an orientation of the
edges of the grid with the property that every subgrid possesses a unique
sink.
An example is shown in Figure~\ref{Figure: grid USO}.

\begin{theorem}
    \label{Theorem: USOs}
    Let $r \in \NN_0$ and $m \in \NN$.
    There is an $r$-dimensional grid $G$ of size $n \defeq rm$, endowed with an
    acyclic unique sink orientation, such that a directed random walk on the
    grid, starting at a random position, takes at least
    \begin{equation}
        \label{eq: theorem USOs}
        \ex{ T(G) } \ge \frac{1}{r!} \ln^r\del{ m+1 } - 1
    \end{equation}
    steps in expectation.
\end{theorem}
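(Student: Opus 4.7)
The plan is to proceed by induction on $r$. The base case $r=1$ is immediate: the only acyclic USO of $K_m$ is a linear order, and from a uniformly random starting vertex the directed random walk takes exactly $\frac{1}{m}\sum_{k=1}^m H_{k-1} = H_m - 1 \ge \ln(m+1) - 1$ expected steps, matching the bound when $r=1$. (The case $r=0$ is trivial since $T(G)=0$ and the right-hand side is $0$.)

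For the inductive step I would construct $G_r$ from an $(r-1)$-dimensional USO $G_{r-1}$ as the Cartesian product $K_m \,\square\, G_{r-1}$: vertex set $[m]\times V(G_{r-1})$, with each slice $\{i\}\times V(G_{r-1})$ carrying a copy of the orientation of $G_{r-1}$, and the $K_m$-edges oriented from higher to lower index. Acyclicity is immediate (the $K_m$-coordinate is non-increasing along every directed edge), and the USO property follows because any subgrid $I\times U$ has its unique sink at $(\min I,\mathrm{sink}(U))$. The construction will likely need refinement (for instance, conjugating the inner orientation on slice $i$ by an $i$-dependent permutation of the $G_{r-1}$-coordinates, or adjusting how $K_m$-edges are installed) to force the walk to spend enough time in each visited slice rather than escape prematurely.

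The analysis splits by the starting slice: writing $f(i)$ for the expected walk duration conditioned on a uniform start in slice $i$, we have $\ex{T(G)}=\frac{1}{m}\sum_{i=1}^m f(i)$. The goal is to show that a walk started in slice $i_0$ must traverse a random subset of lower slices, and that the expected time spent in each visited slice $i$ can be lower bounded by invoking the inductive hypothesis on an $(r-1)$-dimensional sub-USO whose effective size grows with $i$. Summing the contributions yields $f(i)\gtrsim \frac{1}{(r-1)!}\ln^{r-1}(i+1)\cdot g(i)$ for an appropriate weight $g$. The key integral estimate
\[
\sum_{i=1}^m \frac{\ln^{r-1}(i+1)}{i}\;\ge\;\int_1^{m+1}\frac{\ln^{r-1}(x)}{x}\,dx\;=\;\frac{1}{r}\ln^r(m+1)
\]
then converts this into the desired $\frac{1}{r!}\ln^r(m+1)$ factor, with the additive $-1$ absorbing the boundary slack.

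The main obstacle is designing the orientation so that the walk cannot cheat by rapidly escaping a slice along its $i-1$ outgoing $K_m$-edges. At $(i,v)$ the out-degree is $(i-1)+\deg_{G_{r-1}}^{\mathrm{out}}(v)$, and as the walk approaches the inner sink $\deg_{G_{r-1}}^{\mathrm{out}}(v)$ shrinks to $0$, forcing an exit. A naive product construction falls into this trap: a direct computation shows that the expected hitting time from $(i,v)$ decomposes as $V_{r-1}(v)+H_{i-1}$, which yields only $\Theta(r\log m)$ rather than $\Theta(\log^r m)$. Breaking this barrier, so that the walk is compelled to traverse a substantial portion of $G_{r-1}$ within each visited slice before the $K_m$-coordinate drops, is the crux of the argument; I expect the right construction to cleverly synchronize the decrease of $\deg_{G_{r-1}}^{\mathrm{out}}(v)$ with the slice index, so that escape probability remains controlled throughout the inner walk.
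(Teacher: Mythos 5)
Your high-level skeleton matches the paper's: build the grid recursively as $K_m$ times an $(r-1)$-dimensional construction, orient $K_m$-edges from higher to lower index, show that slice $i$ is visited with probability $\sim 1/i$, and convert the resulting harmonic-type sum into $\frac{1}{r}\ln^r(m+1)$ via an integral comparison. You also correctly identify — and even quantify, with the product formula $f(i,v)=V_{r-1}(v)+H_{i-1}$ giving only $\Theta(r\log m)$ — that the naive Cartesian product is not enough, because the escape edges out of slice $i$ eat the bound. That diagnosis is right, and it is the crux.

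But you stop exactly where the proof actually begins: you acknowledge the obstacle ("I expect the right construction to cleverly synchronize...") without supplying the two ingredients that overcome it. (1) \emph{Random permutations at every level of the recursion.} The paper chooses, at each level and independently for each slice, a uniformly random labelling of the $K_m$-factor. The payoff is a symmetry: a walk entering a slice from a fixed vertex behaves exactly like a walk started at a \emph{uniformly random} vertex of that slice, so the inductive hypothesis (which is about a random start) applies directly to the conditional walk inside the slice. Your suggestion of an $i$-dependent deterministic conjugation does not obviously produce this equidistribution, and without it you cannot invoke induction for the time spent inside a visited slice. (2) \emph{The augmented multigraph $G^\Delta$.} Inside slice $i$, every vertex carries $i-1$ extra outgoing escape edges. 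The paper handles this by strengthening the induction to a family of processes $G^\Delta$ (add a sink $v_\infty$ with $\Delta$ parallel edges from every vertex) and proving $\ex{T(G^\Delta)}\ge\frac{1}{r!}\bigl(\ln(m+\Delta+1)-\ln(\Delta+1)\bigr)^r$; the conditional walk in slice $i$ is then \emph{exactly} $T\bigl((H_i)^{\Delta+i-1}\bigr)$, not an approximation. Your sketch has no analogue of this generalized induction hypothesis, and your inner bound "$\frac{1}{(r-1)!}\ln^{r-1}(i+1)\cdot g(i)$" is both vague and pointing the wrong way — in the paper's parametrization the conditional time \emph{decreases} with $i$ (as $(\ln(m+\Delta+i)-\ln(\Delta+i))^{r-1}$), while the visit probability is $\ge\frac{1}{\Delta+i}$; only after the change of variables does the integral collapse to $\frac{1}{r}\ln^r(m+1)$. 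So the architecture is right, and the base cases check out, but the proposal as written does not close the inductive step: it needs the random-permutation symmetry and the $\Delta$-augmented induction hypothesis to turn the intuition into a proof.
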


We remark that the theorem is meaningful primarily for a fixed dimension $r$
and with the grid size tending to infinity.
In this setting it says that the number of pivoting steps can be of order
$\Omega(\log^r n)$.

    \begin{figure}
        \begin{minipage}[t]{0.6\textwidth}
            \begin{subfigure}{0.25\textwidth}
                \includegraphics[height=4em]{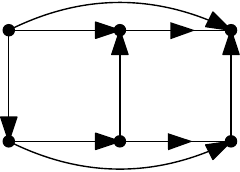}
                \vspace{4em}
                \caption{} 
                \label{Figure: grid USO}
            \end{subfigure}
            \hspace*{0.06\textwidth}
            \begin{subfigure}{0.25\textwidth}
                \centering
                \includegraphics[height=7em]{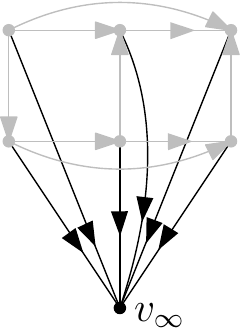}
                \vspace{1em}
                \caption{}
                \label{Figure: augmented}
            \end{subfigure}
            \hspace*{0.05\textwidth}
            \begin{subfigure}{0.25\textwidth}
                \centering
                \includegraphics[height=8em]{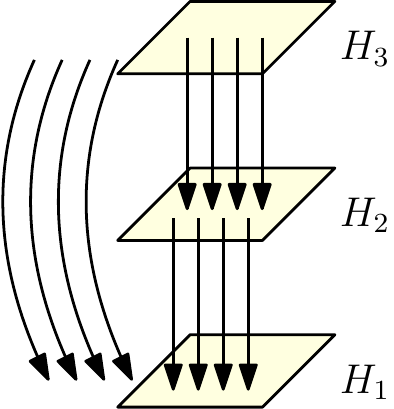}
                \caption{}
                \label{Figure: combed}
            \end{subfigure}
        \end{minipage}
        \hfill
        \begin{minipage}{0.34\textwidth}
                \caption{
                    \\\textbf{(a)}~A grid $G$ of dimension $r=2$ and size $n = 5$.
                    \\\textbf{(b)}~The augmented \mbox{(multi-)}graph $G^\Delta$, here for $\Delta=1$.
                    \\\textbf{(c)}~A schematic depiction of the grid constructed in the proof of
                        Theorem~\ref{Theorem: USOs}.
                }
        \end{minipage}
    \end{figure}

\begin{proof}
    We can choose the grid to be $G = G_1 \times \dots \times G_r$ with
    $
        G_1 = \dots = G_r = K_m.
    $
    We construct a unique sink orientation on $G$ by induction on $r$.
    For $r = 0$, the graph consists of a single vertex, so we need not orient
    any edges.

    For $r \ge 1$, we first choose a permutation of $V(G_r)$
    uniformly at random from the set of all permutations, and we label the
    vertices according to the chosen permutation,
    as in $V(G_r) = \{ 1, \dots, m \}$.
    Now we partition the grid into ``hyperplanes''
    \[
        H_i = G_1 \times \dots \times G_{r-1} \times \{ i \}
        \quad \textstyle (i = 1,\dots,m).
    \]
    Each hyperplane $H_i$ is a subgrid of dimension $r-1$, so we can
    inductively assign an orientation to it.
    We do this for each hyperplane independently (i.e., all random
    permutations used throughout the construction are chosen independently).
    The only edges that we still need to orient are those between vertices
    from two distinct hyperplanes.
    We orient those according to our chosen permutation of $V(G_r)$.
    Explicitly, the edge from a vertex $(u_1,\dots,u_{r-1},i)$ to
    $(u_1,\dots,u_{r-1},j)$ is directed forwards if and only if $i > j$.
    See Figure~\ref{Figure: combed} for an illustration.

    It is easy to verify that this defines an acyclic unique sink orientation
    of the grid.
    We will now analyze the duration of the random walk from a random starting
    position.
    Concerning the starting position, here is a key observation:
    Due to the random permutations involved in the construction of the grid
    orientation, it amounts to the same random process whether we start the
    walk in a random position of the grid, or whether we start in any fixed
    given position.
    Consequently, if the random walk visits one of the hyperplanes $H_i$, we
    can relate the behavior within $H_i$ to the $(r-1)$-dimensional
    construction, for which we have a lower bound by induction.
    However, there \emph{is} a difference between $H_i$ and the
    $(r-1)$-dimensional construction:
    namely, every vertex of $H_i$ has $i-1$ additional outgoing edges by which
    the walk may leave $H_i$ at any moment.
    To account for these, we make use of the following ``augmented''
    multigraph; see Figure~\ref{Figure: augmented} for an example.

    \begin{definition}
        Given any (multi-)graph $\Gamma$ and a parameter $\Delta \in \NN_0$,
        we define an \emph{augmented multigraph} $\Gamma^\Delta$ as follows.
        We add a new, special, vertex $v_\infty$ to the vertex
        set of $\Gamma$.
        Furthermore we add $\Delta$ many edges $\overrightarrow{v v_\infty}$
        for every $v \in V(\Gamma)$.
        If $\Delta = 0$ then we add one additional edge $\overrightarrow{s
        v_\infty}$ for every sink $s$ of $\Gamma$; this way we ensure that
        $v_\infty$ is the only sink of $\Gamma^\Delta$.
    \end{definition}

    \begin{lemma}
        Let $\Delta \in \NN_0$.
        Then the expected duration of the directed random walk on the
        augmented construction $G^\Delta$, starting from a random position
        (and ending in $v_\infty$), satisfies the bound
        \[
            \ex{ T(G^\Delta) } \ge \frac{1}{r!} \del{ \ln\del{ m + \Delta + 1 } - \ln( \Delta + 1) }^r.
        \]
    \end{lemma}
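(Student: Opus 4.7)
I would prove the lemma by induction on $r$, keeping $\Delta$ as a free parameter throughout. Let $\phi(x) := \ln((m+x+1)/(x+1))$, so the target reads $\ex{T(G^\Delta)} \ge \phi(\Delta)^r/r!$. The base case $r = 0$ is immediate: the grid $G$ is a single vertex, the walk on $G^\Delta$ takes one step to $v_\infty$, and both sides equal $1$.

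For the inductive step, I first establish the recursion
\[
    \ex{T(G^\Delta)} = \frac{\Delta+m}{m}\sum_{i=1}^m \frac{\ex{T(H_i^{\Delta+i-1})}}{\Delta+i}.
\]
The recursion comes from the sequence $I_1, I_2, \ldots$ of hyperplane indices visited by the walk. Conditional on being in $H_i$, every vertex has $\Delta+i-1$ external edges ($\Delta$ to $v_\infty$ and one into each of the $i-1$ lower hyperplanes), so the walk restricted to $H_i$ behaves exactly like a directed random walk on $H_i^{\Delta+i-1}$. The random permutations used in constructing $H_i$'s suborientation let us treat the entry vertex as uniform, so conditional on visiting $H_i$ the expected time spent inside $H_i$ equals $\ex{T(H_i^{\Delta+i-1})}$. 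A direct computation on the hyperplane-level Markov chain (in which state $j > i$ jumps to each $k < j$ with probability $1/(\Delta+j-1)$) then shows that the walk visits $H_i$ with probability exactly $(\Delta+m)/(m(\Delta+i))$, giving the recursion.

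Plugging the induction hypothesis into the recursion reduces the problem to proving
\[
    \sum_{i=1}^m \frac{\phi(\Delta+i-1)^{r-1}}{\Delta+i} \ge \frac{\phi(\Delta)^r}{r};
\]
the remaining factor $(\Delta+m)/m \ge 1$ supplies the rest. Set $L := \phi(\Delta)$ and $M := \phi(\Delta+m)$. Since the summand is decreasing in $i$, the sum dominates the integral $\int_{\Delta+1}^{\Delta+m+1} \tilde\phi(y)^{r-1}/y\,dy$ with $\tilde\phi(y) := \ln((m+y)/y)$, and substituting $s = \tilde\phi(y)$ (so that $dy/y = -e^s/(e^s-1)\,ds$) converts this into $\int_M^L s^{r-1}/(1-e^{-s})\,ds$. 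Splitting $1/(1-e^{-s}) = 1 + 1/(e^s-1)$ produces $(L^r - M^r)/r + \int_M^L s^{r-1}/(e^s-1)\,ds$, so it remains to show $\int_M^L s^{r-1}/(e^s-1)\,ds \ge M^r/r$.

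The main obstacle is this last inequality, which I would prove by exhibiting a monotone auxiliary function. The crucial observation is the identity $e^M + e^{-L} = 2$ (direct from the definitions). Parametrizing by $a := m/(\Delta+1) > 0$ yields $L = \ln(1+a)$ and $M = \ln((1+2a)/(1+a))$, both smooth in $a$. Define
\[
    F(a) := \int_M^L s^{r-1}/(e^s-1)\,ds - M^r/r.
\]
Then $F(0) = 0$, since $L$ and $M$ both vanish at $a = 0$. Differentiating in $a$, using $e^L - 1 = a$, $e^M - 1 = a/(1+a)$, $dL/da = 1/(1+a)$, and $dM/da = 1/((1+a)(1+2a))$, the two boundary contributions from the integrand combine with the derivative of $-M^r/r$ to give
\[
    F'(a) = \frac{L^{r-1} - M^{r-1}}{a(1+a)} \ge 0,
\]
since $L \ge M \ge 0$. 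Therefore $F(a) \ge 0$ for all $a \ge 0$, closing the induction. Locating this auxiliary function and parametrization is the main technical hurdle; the clean cancellation producing $F'$ confirms that they are the natural ones.
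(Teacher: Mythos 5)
Your proof is correct and follows the same decomposition as the paper's: write $T(G^\Delta) = \sum_i T_i$, condition on the event $\Ee_i$ of visiting $H_i$, identify the distribution of $T_i \mid \Ee_i$ with that of $T(H_i^{\Delta+i-1})$, and apply the induction hypothesis. You compute the visiting probability exactly, $\Pr[\Ee_i] = \frac{\Delta+m}{m(\Delta+i)}$, where the paper settles for the lower bound $\frac{1}{\Delta+i}$ obtained from the same conditioning on $\tau_i = 0$ versus $\tau_i \ge 1$. The genuine divergence is in closing the induction. The paper makes a small but decisive move: it replaces $\ln(m+\Delta+i)$ by the smaller $\ln(m+\Delta+1)$, after which the sum-to-integral bound yields an integrand with elementary antiderivative $-\frac{1}{r}(\ln(m+\Delta+1)-\ln(\Delta+x))^r$, and the induction closes in one line. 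You instead keep the exact term $\ln(m+\Delta+i)$, which forces the substitution $s=\tilde\phi(y)$, the split $1/(1-e^{-s}) = 1 + 1/(e^s-1)$, and the monotone auxiliary function $F(a)$. I checked your computation $F'(a) = \frac{L^{r-1}-M^{r-1}}{a(1+a)} \ge 0$: the two subtracted terms $\frac{1}{a(1+2a)}$ and $\frac{1}{(1+a)(1+2a)}$ do combine to $\frac{1}{a(1+a)}$, and $L \ge M$ holds since $(1+a)^2 \ge 1+2a$. Both routes reach the identical final bound, so the extra precision in your $\Pr[\Ee_i]$ and in the integrand buys nothing here; the lesson is that spotting the one-sided simplification $\ln(m+\Delta+i) \ge \ln(m+\Delta+1)$ lets the induction close with far less machinery than your correct but considerably more laborious $F(a)$ argument.
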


    \textit{Proof of the lemma.}
        We proceed by induction on $r \ge 0$.
        If $r=0$ then $G$ consists of a single vertex and there is nothing to
        prove; so let $r \ge 1$.
        For $i \in \{1,\dots,m\}$, let $T_i$ denote the number of positions
        that the walk visits on $H_i$, so that the total duration of the walk
        is given by
        \begin{align}
            \label{eq: sum of Ti}
            T(G^\Delta) = T_1 + \dots + T_{m}.
        \end{align}
        Furthermore, let $\Ee_i$ denote the event that the random walk visits
        at least one vertex in $H_i$ ($i = 1,\dots,m$).
        We claim
        \begin{align}
            \label{eq: claim}
            \pr{ \Ee_i } \ge \frac{1}{\Delta+i}.
        \end{align}
        To this end we consider the hitting time
        \[
            \tau_i := \min \cbr{ t ~:~ v_t \in \{ v_\infty \} \cup H_1 \cup \dots \cup H_{i} }
        \]
        where, as before, $v_t \in V(G^\Delta)$ denotes the position that the
        random walk visits at time $t$ ($t = 0,1,2,\dots$).
        Note that the hyperplanes $H_i$ are visited in decreasing order; so
        either the walk visits $H_i$ at time $\tau_i$, or not at all.
        Hence, $\Ee_i$ equals the event $\{ v_{\tau_i} \in H_i \}$.
        We now calculate the probability of this event by conditioning on $\tau_i \ge 1$.

        \textit{Case 1:} $\tau_i \ge 1$.
        Since $v_{\tau_i - 1}$ has $\Delta$ outgoing edges to $v_\infty$ and
        one outgoing edge to each of the hyperplanes $H_1,\dots,H_i$, and
        since the random walk is equally likely to move along any of these
        $\Delta+i$ edges, we obtain
        \begin{align}
            \pr{\Ee_i \Given \tau_i \ge 1} =
            \pr{ v_{\tau_i} \in H_i \Given \tau_i \ge 1 }
            = \frac{1}{\Delta + i}.
            \label{eq: claim-case1}
        \end{align}

        \textit{Case 2:} $\tau_i = 0$.
        Here we need to look at $v_0$, which (conditioned on $\tau_i = 0$) is
        a vertex taken uniformly at random from the set $H_1 \cup \dots \cup
        H_i$.
        Since the hyperplanes $H_1,\dots,H_i$ are all of equal cardinality,
        we obtain \begin{align} \pr{\Ee_i \Given \tau_i = 0} =
            \pr{v_0 \in H_i \Given \tau_i = 0 }
            = \frac{1}{i} \ge
            \frac{1}{\Delta + i}.
        \label{eq: claim-case2}
        \end{align}

        The claim \eqref{eq: claim} follows by combining \eqref{eq:
        claim-case1} and \eqref{eq: claim-case2}.
        Now, it is easy to see that $T_i | \Ee_i$ has the same distribution as
        $T((H_i)^{\Delta+i-1})$, so that we obtain
        \begin{align*}
            \ex{ T_i } &= \pr{\Ee_i} \cdot \ex{ T_i \Given \Ee_i }
            \\ &\ge \frac{1}{\Delta+i} \cdot \ex{ T( (H_i)^{\Delta+i-1} ) }
            \\      &\ge \frac{1}{\Delta+i} \cdot \frac{1}{ (r-1)! } \del{ \ln\del{ m + \Delta + i } - \ln( \Delta+i ) }^{r-1}
        \end{align*}
        where the last step was using the induction hypothesis.
        With \eqref{eq: sum of Ti} we obtain
        \begin{align*}
            \ex{T} &=    \sum_{i=1}^{m} \ex{T_i}
             \\ &\ge  \sum_{i=1}^{m} \frac{1}{ \Delta+i } \cdot \frac{1}{ (r-1)! } \del{ \ln\del{ m + \Delta + i } - \ln( \Delta+i) }^{r-1}
            \\    &\ge  \sum_{i=1}^{m} \frac{1}{ \Delta+i } \cdot \frac{1}{ (r-1)! } \del{ \ln\del{ m + \Delta + 1 } - \ln( \Delta+i) }^{r-1}
            \\    &\ge  \int_{1}^{m+1} \frac{1}{ \Delta+x } \cdot \frac{1}{ (r-1)! } \del{ \ln\del{ m + \Delta + 1 } - \ln( \Delta+x) }^{r-1} \dif x
            \\    &=    \Bigl[~ - \frac{1}{r!} \del{ \ln(m+\Delta+1) - \ln(\Delta+x) }^r ~\Bigr]_{x=1}^{m+1}
            \\    &=    \frac{1}{r!} \del{ \ln(m+\Delta+1) - \ln(\Delta + 1) }^r
        \end{align*}
        which concludes the proof of the lemma.
        In order to deduce the theorem, we choose $\Delta = 0$ to obtain a
        random orientation (i.e., a probability distribution of
        orientations) of $G^0$ such that
        \begin{equation}
            \label{eq: bound TG0}
            \ex{ T(G^0) } \ge \frac{1}{r!} \ln^r( m+1 ) .
        \end{equation}
        A directed random walk on $G$ corresponds to a random walk on $G^0$,
        except that the latter does one additional step in the end (from the
        sink of $G$ to the extra vertex $v_\infty$).
        Thus we need to subtract $1$ from the bound \eqref{eq: bound TG0} to
        obtain the desired bound \eqref{eq: theorem USOs}.
        We are left only to observe that there must then also exist at
        least one concrete (not random) choice of orientation $G$ that
        satisfies this bound.
        This concludes the proof of Theorem~\ref{Theorem: USOs}.
\end{proof}

In Theorem~\ref{Theorem: USOs} we chose $n$ to be a multiple of $r$.
For other values of $n$ we can still deduce essentially the same bound, as in
the following corollary.

\begin{corollaryrep}
    For all $r,n \in \NN$ with $n > r$ there is an $r$-dimensional acyclic unique
    sink orientation of a grid $\hat G$ of size $n$ such that
    $
        \E\bigl[ T(\hat G) \bigr] > \frac{1}{r!} \ln^r \del{ \frac{n}{r} } - 1.
    $
\end{corollaryrep}

\begin{proof}
    If $n$ is divisible by $r$, then the corollary is an immediate consequence
    of Theorem~\ref{Theorem: USOs}.
    Assume that $n$ is not divisble by $r$, and let $m = \lfloor \frac{n}{r} \rfloor$.
    We take our construction $G$ from the proof of Theorem~\ref{Theorem: USOs} for the
    size $n' := rm$ and embed it into a grid $\hat G$ of size $n$.
    We can do this in such a fashion that all the edges between $G$ and $\hat
    G \setminus G$ point into $G$.
    In this way we obtain an acyclic unique sink orientation on $\hat G$ with
    $
        \E\bigl[ T(\hat G) \bigr]
        \ge \E\bigl[ T(G) \bigr]
    $,
    which yields the corollary using \eqref{eq: theorem USOs} with $m >
    \frac{n}{r}-1$.
\end{proof}

\section{One line and \texorpdfstring{$n$}{n} points}
\label{Section: one line and n points}

Here we describe the geometric setting in which we prove our main theorem.
We will assume that we are given a set of $n$ points $A \subseteq \RR^r$ and a
non-zero vector $u \in \RR^r$.
Its linear span $\RR u$ is a line: the ``one line'' or \emph{requirement line}
featured in the heading of this section.

\subparagraph{Pierced simplices and general position.}
We call a set $S \subseteq A$ \emph{pierced} or, more exactly,
\emph{$u$-pierced} if the convex hull $\conv(S)$ intersects the requirement
line $\RR u$.
If in addition $|S| = r$, then $S$ is a \emph{pierced simplex}.
Some readers might find it more natural to reserve the term ``simplex'' for
the set $\conv(S)$ instead of $S$; we will always take care to distinguish
between the two whenever the distinction is important.
A pierced simplex $S$ is \emph{non-degenerate} if
\textsc{(i)}
$S$ is affinely independent,
\textsc{(ii)}
no proper subset of $S$ is a pierced set, and
\textsc{(iii)}
the affine hyperplane spanned by $S$ and the
requirement line are not parallel.\sentenceDash
Within this paper, when we say that $(A,u)$ is \emph{in general position}, we
merely mean that every $u$-pierced simplex $S \subseteq A$ is non-degenerate.

\subparagraph{Below and above.}
Consider the affine span of a non-degenerate pierced simplex $S \subseteq A$:
it is a hyperplane in $\RR^r$ that is not parallel to the vector $u$.
The direction of $u$ thus determines an orientation (a positive and negative
side) of this hyperplane.
For any point $x \in A$, we will say that $x$ lies \emph{(strictly) below
$S$} if it lies on the (strictly) negative side of the hyperplane.
The word ``above'' is understood similarly.

\begin{figure}[t]
    \begin{minipage}[t]{0.55\textwidth}
        \vspace{0pt} 
        \centering
        \includegraphics[height=10em]{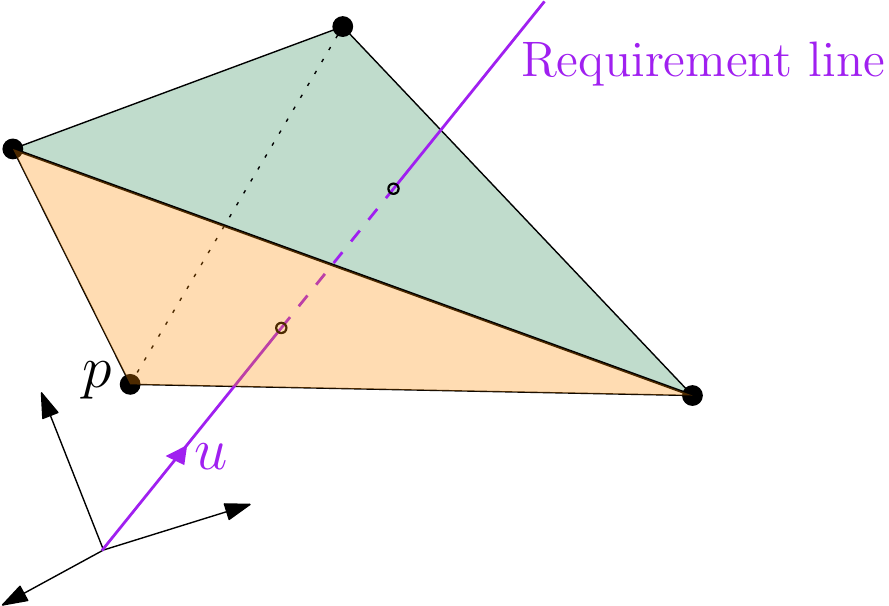}
    \end{minipage}
    \begin{minipage}[t]{0.44\textwidth}
        \vspace{0pt} 
        \caption{A tetrahedron in $\RR^3$.
            Its two front facets, the green and orange triangles,
            are both a pierced simplex.
            If $S$ denotes the green triangle at the top, then the \emph{simplex
            obtained by pivoting at $S$ with $p$} is the orange triangle at the
            bottom.}
        \label{Figure: pierced}
    \end{minipage}
\end{figure}

\subparagraph{Pivoting steps.}
Given $(A,u)$ in general position, a pierced simplex $S \subseteq A$ and
a point $p \in A$ strictly below $S$, we define a new pierced simplex $S'
\subseteq A$ which we call the \emph{simplex obtained by pivoting at $S$ with
$p$}.
To this end consider the set $S \cup \{p\}$:
It is an $r$-dimensional simplex, and the boundary of $\conv(S \cup \{p\})$
is pierced by the line $\RR u$ exactly twice: once in the facet $S$, and once
in another facet which we take to be $S'$.
See Figure~\ref{Figure: pierced} for an example.
We note that
\begin{itemize}
    \item $S'$ is by general position uniquely determined,
    \item $S'$ is a subset of $S \cup \{ p \} \subseteq A$, and
    \item $S'$ is a pierced simplex.
\end{itemize}

\subparagraph{The random process.}
Given a finite set $A \subseteq \RR^r$ of $n$ points, a non-zero vector $u \in
\RR^r$, and a $u$-pierced simplex $S_0 \subseteq A$,
we define the following random process, denoted $\Rr(A,u,S_0)$.
We will keep on assuming that $(A,u)$ is in general position.

The states (or positions) of the process are pierced simplices, and $S_0$ is
the starting position.
The consecutive positions $S_1,S_2,\dots$ are obtained as follows.
If the current position is $S_i$, let $p_i$ be a point (the $i$th ``pivot'')
chosen uniformly at random from the set of points from $A$ that lie strictly
below $S_i$.
(If there are no such points, then the random process terminates at this stage.)
Now define $S_{i+1}$ to be the simplex obtained by pivoting at $S_i$ with
$p_i$.\sentenceDash Our main theorem in Section~\ref{Section: analysis} states that
the expected number of steps until the process terminates can be of order
$\log^r(n)$.

\section{Construction}
\label{Section: construction}

Here we construct the point set that underlies the proof of
our main theorem (Theorem~\ref{Theorem: geometric lower bound} in
Section~\ref{Section: analysis}).
\ifArxivVersion
    For the proofs of the technical lemmas in this section we refer the reader to
    the appendix.
\else
    For the proofs of the technical lemmas in this section we refer the reader to
    the full version of this paper.
\fi

\subparagraph{Points, colors, layers, and phases.}
For all $r,m \in \NN$ we define our point set $A(r,m) \subseteq
\RR^r$ as follows;
a sketch is shown in Figure~\ref{Figure: pointset}.
We use the notation $\zero_r = (0,\dots,0)$ for the all-zeros vector in $\RR^r$.
We let
\[
    A(r,m) \defeq \{\, a_{i,j,k} \,:\, i,j \in [r],~ i \le j,~ k \in [m] \,\}
\]
where
\[
    a_{i,j,k} \defeq
        \pmat{
            \zero_{i-1}
            \\
            \del{ m^3 + m^5 + \dots + m^{2r-2j+1} } + (r-j)m + k
            \\
            \zero_{j-i}
            \\
            -m^{2r-2j+1}
            \\
            -m^{2r-2j-1}
            \\
            \vdots
            \\
            -m^5
            \\
            -m^3
        }.
\]
In particular, for $j=r$, we have $a_{i,r,k} = k \ee_i$, where $\ee_i$ denotes
the $i$th standard unit vector in $\RR^r$.
We call the indices $i,j,k$ the \emph{color}, the \emph{layer}, and the
\emph{phase} of a point, respectively.
Sometimes we will need a notational shorthand for colors and layers, so we define
$\Cc_i \subseteq A(r,m)$
to denote the set of points of color $i$,
and $\Ll_j \subseteq A(r,m)$
to denote the set of points from layer $j$.
So defined, $\Ll_j$ consists of $jm$ points, and our point set consists of
$n = \binom{r+1}{2} \cdot m$ points overall.

We will fix $u$ to denote the all-ones vector, $u = \one_r$, so that a set $S
\subseteq A(r,m)$ is pierced if and only if its convex hull intersects the
diagonal line $\RR \one_r$.
The rest of this section is devoted to a number of lemmas concerning the
relevant structure of our construction.
We begin by identifying the pierced subsets.
\begin{toappendix}
While the lemmas in this section could also be proved in an intuitive
geometric/visual way, such an exposition would be likely to take longer than
our straight-forward proofs based on calculations.
\end{toappendix}

\begin{lemmarep}
    \label{Lemma: colors}
    Let $S \subseteq A(r,m)$ be a pierced subset.
    Then $S$ contains a point of color $i$, for all $i \in [r]$.
\end{lemmarep}

\begin{appendixproof}
    This follows essentially from the fact that every point from our set $A(r,m)$
    has a unique positive (and large) entry that determines its color class.
    The formal proof goes as follows.
    Since $S$ is pierced, there is a convex combination of the form
    \begin{align}
        \label{eq: lambda-x}
        \mu \one_r = \sum_{x \in S} \lambda_x x
    \end{align}
    with $\mu \in \RR$, $\lambda_x \ge 0$ for all $x \in S$, and
    $\sum_{x \in S} \lambda_x = 1$.
    We will use the following sign properties that hold for every point $x \in
    A(r,m)$:
    \begin{romanenumerate}
    \item
        We have $\one_r^\T x > 0$.
    \item
        Let $i$ denote the color of the point $x$.
        Then $x_i > 0$, and $x_j \le 0$ for all $j \neq i$.
    \end{romanenumerate}
    Taking the sum of the coordinates in \eqref{eq: lambda-x} and using
    property (i), we obtain
    \begin{align*}
        \mu \hspace{1pt} r = \sum_{x \in S} \lambda_x \one_r^\T x = \one_r^T x > 0
    \end{align*}
    and hence $\mu > 0$.
    Comparing the signs of the left-hand and right-hand side in \eqref{eq:
    lambda-x} we find that for every $i$ there must be some $x^{(i)} \in S$
    with $x^{(i)}_i > 0$.
    By property (ii), $x^{(i)}$ has color $i$, which proves the lemma.
\end{appendixproof}

\begin{figure}
    \begin{minipage}[t]{0.70\textwidth}
        \vspace{2pt}
        \centering
        \includegraphics[height=12em]{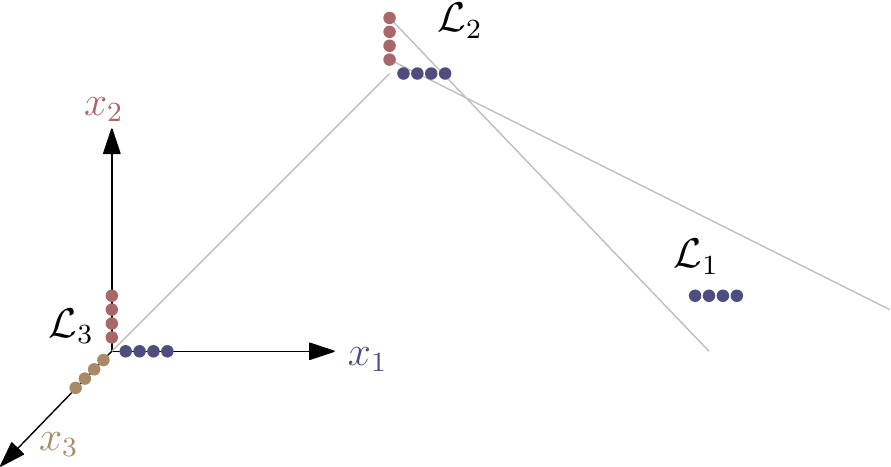}
    \end{minipage}
    \begin{minipage}[t]{0.29\textwidth}
    \caption{The constructed point set for $r=3$, $m=4$. The sketch is not
        true to scale. All off-axis points lie in the plane $x_3 = -64$.}
    \label{Figure: pointset}
    \end{minipage}
\end{figure}

The next lemma shows the converse of Lemma~\ref{Lemma: colors}, implying that the
set of pierced simplices can be identified with the set $\Cc_1 \times \dots
\times \Cc_r$.
More to the point, this implies that the extended Gale transform of our point
set defines a grid polytope.

\begin{lemmarep}
    \label{Lemma: pierced}
    Let $S \subseteq A(r,m)$ and assume that $S$ contains a point from each
    color class $\Cc_1,\dots,\Cc_r$.
    Then $S$ is a pierced subset.
    Furthermore, for each $i \in [r]$, $\conv(S)$ intersects the $i$th coordinate
    axis in some point $t \ee_i$ with $t > 0$.
\end{lemmarep}

\begin{appendixproof}
    It suffices to show the second statement of the lemma; so let $i \in [r]$.
    Consider the system
    \begin{align}
        \label{eq: matrix}
        B \lambda = t \ee_i
    \end{align}
    in the variables $(\lambda_1,\dots,\lambda_r,t) \in \RR^{r+1}$, where
    $B \in \RR^{r \times r}$
    is a matrix whose $j$th column is an arbitrarily chosen element of $S \cap
    \Cc_j$ ($j = 1,\dots,r$).
    We need to show that \eqref{eq: matrix} has a solution with $t > 0$,
    $\sum_i \lambda_i = 1$ and $\lambda_i \ge 0$ for all $i$.
    To this end we note that, by construction of our point set,
    \begin{itemize}
    \item
        all non-diagonal entries of $B$ are non-positive, and
    \item
        $B^\T$ is strictly diagonally dominant, i.e.,
        $B_{ii} > \sum_{j \neq i} \abs{ B_{ji} }$.
    \end{itemize}
    These conditions imply that $B^\T$ and hence $B$ is what is known as a
    non-singular M-matrix, which in turn implies that $B^{-1}$ exists and is
    non-negative (cf.~conditions $\text{N}_{39}$ and $\text{F}_{15}$ in
    \cite{Plemmons'77}).
    Hence the $i$th column of $B^{-1}$ is a non-negative solution
    $\hat{\lambda}$ to the equation $B \hat{\lambda} = \ee_i$.
    We can scale the vector $\hat{\lambda}$ by the factor
    \[
        \textstyle
        t \defeq \del{ \sum_i \hat{\lambda}_i }^{-1} > 0
    \]
    so as to achieve that the entries of the vector $\lambda \defeq t
    \hat{\lambda}$ sum up to $1$, which gives the desired solution to the
    system \eqref{eq: matrix}.
\end{appendixproof}

\begin{lemmarep}
    \label{Lemma: non-degenerate}
    $(A,\one_r)$ is in general position; that is,
    every pierced simplex $S \subseteq A(r,m)$ is non-degenerate.
\end{lemmarep}

\begin{appendixproof}
    First we need to prove that $S$ is affinely independent.
    To this end, let $\aff(S)$ denote the affine span of $S$:
    we want to prove that $\aff(S)$ is $(r-1)$-dimensional (a hyperplane).
    Assume that $\aff(S)$ was at most $(r-2)$-dimensional, and let $z$
    denote the intersection of $\conv(S) \subseteq \aff(S)$ with the
    requirement line.
    Then, by Carath\'eodory's theorem applied within the space $\aff(S)$,
    some subset $S' \subseteq S$ with $|S'| \leq r-1$ would still contain $z$
    in its convex hull.
    It follows that $S'$ is a pierced subset with at most $r-1$ elements; a
    contradiction to Lemma~\ref{Lemma: colors}.

    Second we need to check that no proper subset of $S$ is a pierced subset.
    Assuming the contrary would again lead to an immediate contradiction with
    Lemma~\ref{Lemma: colors}.

    Third we need to prove that $\aff(S)$ and the requirement line are not
    parallel.
    Assume they were parallel.
    Then, since they intersect, the requirement line would lie within the
    hyperplane $\aff(S)$, wherein it intersects the convex set $\conv(S)$.
    But if it intersects $\conv(S)$, then it also intersects some facet
    $\conv(S')$ with $S' \subseteq S$ and $|S'| \le \dim(\aff(S)) = r-1$.
    Again we would have that $S'$ is a pierced subset with at most $r-1$
    elements, in contradiction with Lemma~\ref{Lemma: colors}.
\end{appendixproof}

Lemma~\ref{Lemma: non-degenerate} above assures that the random process
associated with our construction is well-defined.
The next lemma states that, as our random process evolves, the intersection
value $t$ of the current position with the $i$th coordinate axis ($i =
1,\dots,r)$ is monotonically decreasing with time and, thus, can serve as a
measure of progress.
This is of course by no means true for an arbitrary point set, but it holds in
the case of our construction.
Next, Lemma~\ref{Lemma: layer r-1} states a simple condition from which to
tell whether the points from the layer $\Ll_{r-1}$ lie above or below the
current position; this condition is immediately relevant for
the analysis of the random process.

\begin{lemma}
    \label{Lemma: monotone}
    Let $i \in [r]$.
    Let $S \subseteq A(r,m)$ be a pierced simplex,
    let $p \in A$ be a point strictly below $S$, and let $S'$ denote
    the pierced simplex obtained by pivoting at $S$ with $p$.
    Let $t_i \ee_i$ and $t'_i \ee_i$ denote the intersection of $\conv(S)$
    (respectively, $\conv(S')$) with the $i$th coordinate axis, as in
    Lemma~\ref{Lemma: pierced}.
    Then we have $t'_i \le t_i$ for all $i$.
\end{lemma}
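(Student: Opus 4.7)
My approach would be to exploit Lemma~\ref{Lemma: pierced} to convert the question into one about a single affine functional evaluated on $\conv(S \cup \{p\})$. By Lemma~\ref{Lemma: pierced}, $\conv(S)$ meets the $j$th coordinate axis at a point $t_j \ee_j$ with $t_j > 0$, for every $j \in [r]$. The $r$ points $t_1 \ee_1,\dots,t_r \ee_r$ are affinely independent, so $\aff(S)$ is the unique hyperplane through them, and it is cut out by the explicit affine equation
\[
    \aff(S) = \{\, x \in \RR^r : f_S(x) = 0 \,\}, \qquad f_S(x) \defeq \sum_{j=1}^r \frac{x_j}{t_j} - 1.
\]

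The first step is to identify which side of $\aff(S)$ is the ``above'' side. The directional derivative of $f_S$ in direction $\one_r$ is $\sum_j 1/t_j > 0$, so moving from $\aff(S)$ in the direction of $u = \one_r$ strictly increases $f_S$. Hence the ``above'' side is exactly $\{f_S > 0\}$ and the ``below'' side is $\{f_S < 0\}$.

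The second step is to show that the whole pivoting simplex $\conv(S \cup \{p\})$ lies in the closed half-space $\{f_S \le 0\}$. This is an immediate affine calculation: since $f_S$ vanishes on $S$, for any convex combination $x = \sum_{s \in S} \alpha_s s + \beta p$ with $\alpha_s,\beta \ge 0$ and $\sum \alpha_s + \beta = 1$ we get $f_S(x) = \beta\, f_S(p)$, which is $\le 0$ because $\beta \ge 0$ and $f_S(p) < 0$ (the hypothesis that $p$ lies strictly below $S$).

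The third step is to conclude. Since $S' \subseteq S \cup \{p\}$, we have $\conv(S') \subseteq \conv(S \cup \{p\})$, and applying Lemma~\ref{Lemma: pierced} to the pierced simplex $S'$ places $t'_i \ee_i$ inside $\conv(S')$. Therefore $f_S(t'_i \ee_i) \le 0$, which after substitution reads $t'_i / t_i - 1 \le 0$, i.e., $t'_i \le t_i$. There is no substantive obstacle here: the only delicate point is keeping track of the sign convention distinguishing ``above'' from ``below'', which the directional-derivative computation handles cleanly. The whole argument is essentially three lines of affine algebra bootstrapped on the axis-intersection description of $\aff(S)$ provided by Lemma~\ref{Lemma: pierced}.
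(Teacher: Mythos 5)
Your proposal is correct, and it is in fact more than the paper itself offers: the paper \emph{omits} the proof of Lemma~\ref{Lemma: monotone}, remarking only that the reader may make it ``visually plausible'' with the help of Lemma~\ref{Lemma: pierced} and the observation that all points of $A(r,m)$ lie outside the positive orthant cone. Your argument is a clean formalization of exactly this intuition. The key moves all check out: Lemma~\ref{Lemma: pierced} guarantees $t_j > 0$ for all $j$, so $\aff(S)$ is the zero set of $f_S(x)=\sum_j x_j/t_j - 1$ and the sign convention ($f_S>0$ is ``above'', since $f_S$ increases along $u=\one_r$) is fixed correctly; $p$ strictly below $S$ gives $f_S(p)<0$; affinity of $f_S$ together with the fact that $f_S$ vanishes on $S$ forces $f_S\le 0$ on $\conv(S\cup\{p\})\supseteq\conv(S')$; and evaluating at $t'_i\ee_i\in\conv(S')$, using $t_i>0$, yields $t'_i\le t_i$. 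All the $A(r,m)$-specific input is correctly channeled through Lemma~\ref{Lemma: pierced}, and I see no gap.
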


\begin{toappendix}
We omit the proof of Lemma~\ref{Lemma: monotone}; the reader may want to make it
visually plausible with the help of Lemma~\ref{Lemma: pierced}.
The latter is essentially saying that all points of
$A(r,m)$ lie on the outside of the cone spanned by the coordinate axes.
\end{toappendix}

\begin{lemmarep}
    \label{Lemma: layer r-1}
    Let $r \ge 2$, $m \ge 2$, let $S \subseteq A(r,m)$ be a pierced simplex,
    and let $t_i \ee_i$ denote the intersections  of $\conv(S)$ with the $i$th
    coordinate axis as in Lemma~\ref{Lemma: pierced} ($i = 1,\dots,r$).
    \addtolength\leftmargini{0.5em}
    \begin{alphaenumerate}
    \item
        If, for some $i$, $t_i \le t_r$,
        then all points from $\Cc_i \cap \Ll_{r-1}$ lie strictly \emph{above} $S$.
    \item
        If, for some $i$, $t_i \ge t_r+1$,
        then all points from $\Cc_i \cap \Ll_{r-1}$ lie strictly \emph{below} $S$.
    \item
        If $i$ is such that $t_i = \min \{ t_1,\dots,t_r \}$, then all points
        from $\Cc_i \setminus \Ll_r$ lie strictly \emph{above} $S$.
        In particular we then have $t_i \ee_i \in S$.
    \end{alphaenumerate}
\end{lemmarep}

\begin{appendixproof}
    First we recall that $S$ contains one point from each color class, and
    we remark that the color class $\Cc_r$ consists solely of the points
    $\ee_r, 2\ee_r, \dots, m\ee_r$.
    This implies $1 \le t_r \le m$, which will be used in the calculation
    below.
    Second we recall that a generic point from $\Cc_i \cap \Ll_{r-1}$ is, by
    definition, of the form
    \begin{align}
        \label{eq: generic 1}
        \pmat{
            \zero_{i-1}
            \\
            m^3 + m + k
            \\
            \zero_{r-1-i}
            \\
            -m^3
        }
    \end{align}
    with $k \in [m]$.

    \emph{Ad (a).}
    The affine hyperplane spanned by $S$ has the equation
    \begin{align}
        \label{eq: hyperplane}
        \frac{x_1}{t_1} + \dots + \frac{x_r}{t_r} - 1 = 0;
    \end{align}
    and a point lies strictly above $S$ if and only if the left hand side in
    \eqref{eq: hyperplane} is positive.
    We can easily verify the statement by plugging \eqref{eq: generic 1}
    into \eqref{eq: hyperplane}, which gives indeed
    \begin{align*}
        \frac{ m^3 + m + k }{ t_i }
        +
        \frac{ -m^3 }{ t_r }
        - 1
        &\ge
        \frac{ m^3 + m + 1 }{ t_r }
        +
        \frac{ -m^3 }{ t_r } - 1
        \\&=
        \frac{  m + 1 }{ t_r } - 1
        \ge
        \frac{  m + 1 }{ m } - 1
        > 0.
    \end{align*}

    \emph{Ad (b).}
    This time we need to show that the left-hand side in \eqref{eq: hyperplane}
    is negative. We calculate similarly:
    \begin{align*}
        \frac{ m^3 + m + k }{ t_i }
        +
        \frac{ -m^3 }{ t_r }
        - 1
        &\le
        \frac{ m^3 + 2m }{ 1 + t_r }
        +
        \frac{ -m^3 }{ t_r }
        - 1
        \\&=
        \frac{-m^3 + 2m t_r }{ (1 + t_r) t_r } - 1
        \\&\le
        \frac{ -m^3 + 2 m^2 } { 2 } - 1 &\hspace{-5em} \text{since $1 \le t_r \le m$}
        \\&<
        0.
    \end{align*}

    \emph{Ad (c).}
    Let $x \in \Cc_i \setminus \Ll_r$.
    We use the convenient fact that, for all such points $x$, we have $\one^\T
    x \ge m+1$.
    Plugging $x$ into \eqref{eq: hyperplane} gives
    \begin{align*}
        \frac{x_1}{t_1} + \dots + \frac{x_r}{t_r} - 1
        &= \underbrace{ \frac{x_i}{t_i} }_{ > 0 } + \sum_{j \neq i}
        \underbrace{ \frac{x_j}{t_j} }_{ < 0 } - 1
        \ge \frac{ x_i + \sum_{j \neq i} x_j }{ t_i } - 1
        \\
        &= \frac{\one^\T x}{t_i} - 1
        \ge \frac{m+1}{m} - 1
        > 0.
    \end{align*}
    In order to deduce the additional statement $t_i \ee_i \in S$,
    recall that by Lemma~\ref{Lemma: colors}, $S$ must contain some point from
    the color class $\Cc_i$, and by the above, this point must be an element
    of $\Ll_r$. But all the points from $\Ll_r \cap \Cc_i$ lie on the $i$th
    axis, so $t_i \ee_i$ is the only possible choice.
\end{appendixproof}

The last lemma in this section states that taking the point set $A(r+1,m)$ and
removing the outermost layer $\Ll_{r+1}$ yields a point set that is
equivalent, for our purposes, to the set $A(r,m)$.
This observation is key to the inductive approach followed in
section~\ref{Section: analysis}.
Actually, the statement is a bit more general: The lemma starts from the set
$A(R,m)$ for any $R > r$ and then removes all higher layers
$\Ll_{r+1},\dots,\Ll_{R}$.

\begin{lemmarep}
    \label{Lemma: deep}
    Assume $m \ge 3$.
    For $R > r$,
    let
    \[
        B := \{ (x_1,\, \cdots,\, x_r) \,:\, x \in A(R,m),~
            x \in \Ll_{1} \cup \dots \cup \Ll_r \}.
    \]
    Then Lemmas~\ref{Lemma: colors} to \ref{Lemma: layer r-1} are also valid
    for the point set $B$ in place of $A(r,m)$.
\end{lemmarep}

\begin{appendixproof}
    Lemmas~\ref{Lemma: colors} to~\ref{Lemma: monotone} are based on sign
    properties that still hold for the set $B$; so their validity follows with
    the original proof.
    We will now verify statements (a,b,c) from Lemma~\ref{Lemma: layer r-1} for the
    set $B$ by showing that their original proof can be adapted,
    where we abbreviate
    \[
        \alpha := m^3 + m^5 + \dots + m^{2R-2r+1} + (R-r)m.
    \]
    The hyperplane spanned by $S$ can again be written as in \eqref{eq:
    hyperplane}, but this time instead of $1 \le t_r \le m$
    we have $\alpha+1 \le t_r \le \alpha+m$,
    and this time a generic point from $\Ll_{r-1} \cap \Cc_i$ is of the form
    \begin{align}
        \label{eq: generic 2}
            \pmat{
                \zero_{i-1}
                \\
                \alpha + m^{2R-2r+3} + m + k
                \\
                \zero_{r-1-i}
                \\
                - m^{2R-2r+3}
            }.
    \end{align}

    \emph{Ad (a).}
    We assume $t_i \le t_r$.
    In order to show that all points from $\Cc_i \cap \Ll_{r-1}$ lie below the
    hyperplane given by \eqref{eq: hyperplane},
    we plug \eqref{eq: generic 2} into \eqref{eq: hyperplane} to obtain
    \begin{align*}
        &
        \frac{ \alpha + m^{2R-2r+3} + m + k }{ t_i }
        +
        \frac{ -m^{2R-2r+3} }{ t_r }
        - 1
        \\&\ge
        \frac{ \alpha + m^{2R-2r+3} + m + 1 }{ t_r }
        +
        \frac{ -m^{2R-2r+3} }{ t_r }
        - 1
        \\&\ge
        \frac{ \alpha + m + 1 }{ \alpha+m }
        - 1
        =
        \frac{1}{\alpha+m}
        > 0,
    \end{align*}
    as desired.

    \emph{Ad (b).}
    We first note that the expression $\alpha+m+2$, which will occur in the
    calculation below, can be upper bounded e.g.~by $\frac{89}{72}
    m^{2R-2r+1}$ (assuming $m \ge 3$ and $R-r \ge 1$),
    which implies the coarse, but useful, bound
    \begin{align}
        \label{eq: useful bound}
        (\alpha+m+2)(\alpha+m+1) < 2 m^{4R-4r+2}.
    \end{align}
    This time we assume $t_i \ge 1 + t_r$ and we want to show that
    all points from $\Cc_i \cap \Ll_{r-1}$ lie below the hyperplane spanned by
    $S$; so we plug a generic such point into \eqref{eq: hyperplane} to obtain
    \begin{align*}
        &
        \frac{ \alpha + m^{2R-2r+3} + m + k }{ t_i }
        +
        \frac{ -m^{2R-2r+3} }{ t_r }
        - 1
        \\&\le
        \frac{ \alpha + m^{2R-2r+3} + 2m }{ 1 + t_r }
        -
        \frac{ m^{2R-2r+3} }{ t_r }
        - 1
        \\&=
        \frac{ \alpha + 2m }{ 1 + t_r }
        -
        \frac{ m^{2R-2r+3} }{ (1 + t_r) t_r }
        - 1
        \\&\le
        \frac{ \alpha + 2m }{ \alpha+2 }
        -
        \frac{ m^{2R-2r+3} }{ (\alpha+m+2) (\alpha+m+1) }
        - 1
        \\&=
        \frac{ 2m-2 }{ \alpha+2 }
        -
        \frac{ m^{2R-2r+3} }{ (\alpha+m+2) (\alpha+m+1) }
        \\&<
        \frac{ 2m-2 }{ m^{2R-2r+1} }
        -
        \frac{ m^{2R-2r+3} }{ 2 m^{4R-4r+2} }
        &\hspace{-5em} \text{by \eqref{eq: useful bound}}
        \\&=
        \frac{-1}{ 2 m^{2R-2r+1} } \del{ m-2 }^2 \le 0.
    \end{align*}

    \emph{Ad (c).}
    The proof from Lemma~\ref{Lemma: layer r-1} adapts immediately, this time
    using the relation $\one^\T x \ge \alpha+m+1$.
\end{appendixproof}

\section{Analysis}
\label{Section: analysis}

The goal of this section is to prove the main theorem of this paper, concerned
with the random process $\Rr_{r,m} = \Rr(A,u,S_0)$ associated with the point set
$A \defeq A(r,m)$, the all-ones vector $u \defeq \one_r$, and the starting position
$S_0 \defeq \{ m\ee_1,\dots,m\ee_r \}$:
\begin{theorem}
    \label{Theorem: geometric lower bound}
    The expected number of steps performed by the random process $\Rr_{r,m}$ is at least
    \begin{align*}
        \frac{1}{r!^3} \del{ \ln \del[1]{  m + \chooseTwo{r} + \Delta } - \ln \del[1]{ 1+ \chooseTwo{r} + \Delta } }^r
        = \Omega(\log^r m).
    \end{align*}
\end{theorem}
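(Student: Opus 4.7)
The plan is to follow the inductive blueprint of Theorem~\ref{Theorem: USOs}, with the layer decomposition of $A(r,m)$ playing the role of the grid's hyperplane decomposition. As in the USO case, I would establish a strengthened statement in parameterized form, introducing a parameter $\Delta \in \NN_0$ that counts additional ``escape'' pivots available to the walk at each step, and prove this stronger bound by induction on $r$. The theorem then follows from the $\Delta = 0$ case (or, more generally, from a specific choice of $\Delta$ absorbed into the starting configuration). The base case $r = 0$ is immediate, as there is nothing to pivot.

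For the inductive step I would use $t_r$ as the progress measure. Since the color-$r$ point of every pierced simplex lies in $\Cc_r = \{k\ee_r : k \in [m]\}$, and since by Lemma~\ref{Lemma: monotone} the value $t_r$ is non-increasing along the walk, I define the \emph{level} of $S$ to be the integer $K \in [m]$ with $K\ee_r \in S$. The level strictly decreases precisely when the pivot is a color-$r$ point. For each $K \in [m]$ let $\mathcal{E}_K$ denote the event that the walk visits some state of level $K$. A hitting-time argument analogous to the one in the proof of Theorem~\ref{Theorem: USOs} should yield $\Pr[\mathcal{E}_K] \ge 1/(\Delta + \binom{r}{2} + K)$: the walk's first entry into the set of states of level $\le K$ is a uniformly random pivot from a certain collection of eligible options, and bounding the size of that collection by $\Delta + \binom{r}{2} + K$ yields the probability bound. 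Lemma~\ref{Lemma: layer r-1}(a,b) is essential here to determine which layer-$(r-1)$ points are below the current simplex.

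The heart of the induction is the recursive reduction: conditioned on $\mathcal{E}_K$, the sub-walk that stays within level $K$ should act like an augmented random process on $A(r-1,m)$. The key tool is Lemma~\ref{Lemma: deep}: after projecting out the $r$-th coordinate, the layers $\Ll_1 \cup \cdots \cup \Ll_{r-1}$ satisfy the same combinatorial lemmas as $A(r-1,m)$, so that a pivot at a color-$i$ point with $i < r$ is equivalent (in the projection) to a genuine pivot in the $(r-1)$-dimensional process. The remaining pivots at the current state\,---\,those of color $r$ with index smaller than $K$, the $\Delta$ external escape pivots, and the layer-$r$ points that happen to lie below $S$\,---\,together contribute approximately $\Delta + \binom{r-1}{2} + K$ ``augmentation'' options, which feeds exactly into the inductive hypothesis. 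Summing $\Pr[\mathcal{E}_K] \cdot \E[T \mid \mathcal{E}_K]$ over $K \in [m]$ and converting the resulting harmonic sum into an integral, as in the USO proof, produces the claimed bound.

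The main obstacle will be honest bookkeeping. Two subtleties stand out. First, one must argue that the state at which the walk enters level $K$ is distributed regularly enough for the inductive hypothesis to apply, in analogy with the ``random starting position equals fixed starting position'' observation from Theorem~\ref{Theorem: USOs}; here this cannot be arranged via a symmetric random construction but must instead be derived from Lemma~\ref{Lemma: deep} and the fact that the coordinate projection preserves the relevant geometric structure. Second, the constants $\binom{r}{2}$ and $r!^3$ in the final bound must telescope out of the recursion exactly, which requires carefully tracking the augmentation parameter through each induction step and accounting for every escape option\,---\,in particular, the $r!^3$ (rather than the $r!$ of the USO bound) suggests that the recursion loses roughly a factor $1/r^3$ per level, presumably one factor of $1/r$ each from $\Pr[\mathcal{E}_K]$, from the effective sizing of the sub-process, and from the integration step.
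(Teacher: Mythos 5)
Your high-level blueprint does match the paper's: an augmented process parameterized by $\Delta$, a scalar progress measure that is monotone along the walk, a hitting-time bound for visiting a given value of the measure, and a recursion via Lemma~\ref{Lemma: deep} to the $(r-1)$-dimensional process. You also correctly anticipate that the final theorem follows by setting $\Delta = 0$ in a stronger parameterized inequality. But there are two concrete gaps, and they are precisely where the paper spends its effort.

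First, you choose $t_r$ (your ``level'') as the progress measure, while the paper uses $\phase(S) = \min\{t_1,\dots,t_r\}$. This is not a cosmetic difference: when the level drops to $K$ by pivoting $K\ee_r$, the other intersection values $t_1,\dots,t_{r-1}$ may already be $\le K$, in which case some color classes $\Cc_i\cap\Ll_{r-1}$ lie \emph{above} the current simplex (Lemma~\ref{Lemma: layer r-1}(a)) rather than below. The reduction via Lemma~\ref{Lemma: deep} requires the opposite, namely that every point of $\Ll_{r-1}$ lies strictly below the current position, so that the sub-walk within the level is equivalent to a walk on the lower-dimensional point set. The paper ensures this by introducing \emph{good phases}: phase $k$ is good if it is visited, the pivot that enters it has color $r$, and the resulting position contains no point of $\Ll_{r-1}$. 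Condition (ii) costs a factor $1/r$, and condition (iii) is established by a separate coupon-collector-style argument over the set of ``bad'' indices $i$ with $t_i > t_r$, which also costs $1/r$. You identify this as the main obstacle but do not resolve it, and your outline contains no mechanism for controlling the $t_i$ with $i<r$ at entry time.

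Second, as a consequence, your claimed hitting probability $\Pr[\mathcal{E}_K] \ge 1/(\Delta + \chooseTwo{r} + K)$ is not what the argument yields. The paper's Lemma~\ref{Lemma: good phase} gives $\Pr[\text{phase $k$ is good}] \ge \frac{1}{r(\Delta + kr)}$, which is smaller by roughly a factor $r^2$; the two extra factors of $1/r$ come exactly from the good-phase conditions above. The $\chooseTwo{r}$ in the final bound does not originate in the hitting probability at all: it arises because the recursion feeds forward $\Delta' = \Delta + (k-1)r$ (Lemma~\ref{Lemma: coupling}), and Pascal's rule $\chooseTwo{r} = \chooseTwo{r-1} + (r-1)$ absorbs the per-level increment into a clean inductive invariant. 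Your bookkeeping would therefore not telescope to $1/r!^3$ as written; you need to replace your probability estimate with something of the form $\frac{1}{r(\Delta + kr)}$, track $\Delta' = \Delta + (k-1)r$ explicitly, and supply the missing argument that a good phase is reached with probability $\Omega(1/r)$ conditioned on visiting it with a color-$r$ pivot.
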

In terms of the number of points, $n$, the bound
can be written in the form $\Omega(\log^r n)$.

\subparagraph{The augmented process $\Rr^\Delta_{r,m}$.}
In order to make an inductive proof possible, we will
make use of an ``augmented'' pivoting process,
in analogy to the ``augmented graph'' that we used in the proof of
Theorem~\ref{Theorem: USOs}.
Given a number $\Delta \ge 0$, the \emph{augmented process} $\Rr^\Delta_{r,m}$
is defined as follows.
\begin{itemize}
    \item
        The starting position is chosen by an adversary, in the following way.
        The adversary chooses one new point $\alpha_i \ee_i$ on each axis,
        subject to the constraint $\alpha_i \ge m$.
        These points are added to the point set $A = A(r,m)$ to obtain an
        augmented point set
        \[
            A' \defeq A \cup \{ \alpha_i \ee_i \,:\, i \in [r] \},
        \]
        and the starting position is now chosen as
        $
            S_0 \defeq \{ \alpha_1 \ee_1, \dots, \alpha_r \ee_r \}.
        $
        We fit the new points into our terminology of colors, layers and
        phases by saying that $\alpha_i \ee_i$ has color $i$, layer $r$ and
        phase $m+1$; and we remark that the lemmas in Section~\ref{Section:
        one line and n points} still hold for the augmented point set.
    \item
        The positions $S_0,S_1,\dots$ of the augmented process are pierced
        simplices of $A'$, except that we also introduce a new, special,
        position $S_\infty$.
        ($S_\infty$ is just a formal symbol; it is not represented by any
        simplex.)
        This will be the terminal position.
    \item
        If we are currently at position $S_t$, then the next position $S_{t+1}$ is
        obtained as follows:
        Let $\text{below}(S_t) \subseteq A'$ denote the set of points that lie
        strictly below $S_t$.
        We draw a pivot element $p_{t+1}$ from the set $\text{below}(S_t) \cup
        \{ \infty \}$ according to the distribution
        \begin{align*}
            \pr{p_{t+1} = x} =
                \begin{cases}
                    \frac{1}{\abs{ \text{below}(S_t) } + \Delta}
                        & \text{ for } x \in \text{below}(S_t),
                    \\[2ex]
                    \frac{\Delta}{ \abs{ \text{below}(S_t) } + \Delta}
                        & \text{ for } x = \infty.
                \end{cases}
        \end{align*}
        If $p_{t+1} = \infty$, then $S_{t+1} = S_\infty$, and the process terminates.
        Otherwise we perform a standard pivoting step at $S_t$ with $p_{t+1}$.
        (Edge case: If $\Delta = 0$ and $\text{below}(S_t) = \emptyset$, then
        we always pick $p_{t+1} = \infty$.)
\end{itemize}
Note that, despite its name, the ``augmented'' process typically terminates
earlier than the non-augmented process: the larger the parameter $\Delta$ is,
the sooner!
For $\Delta=0$ the augmented process behaves like the original, non-augmented
process\midSentenceDash except for the modified starting position and one
additional final pivoting step towards the terminal position $S_\infty$.

\subparagraph{The phase of a pierced simplex.}
We define the \emph{phase} of a pierced simplex $S \subseteq A'$ as
\[
    \phase(S) \defeq \min \{ \phase(p) \,:\, p \in S \text{ with }
    \mathrm{layer}(p) = r \}.
\]
The minimum is well-defined because $S$ contains at least
one point from the layer $\Ll_r$:
Indeed Lemma~\ref{Lemma: colors} tells us that $S$ contains a point from
$\Cc_r$, which is a subset of $\Ll_r$.
For consistency we also define $\text{phase}(S_\infty) \defeq 0$.\sentenceDash
Note that we are overloading the term ``phase'', because we have defined
the phase of a point earlier.

We remark that the phase of a pierced simplex can be equivalently
written as $\phase(S) = \min \{t_1,\dots,t_r\}$, where $t_i \ee_i$ denotes the
intersection of $\conv(S)$ with the $i$th coordinate axis;
this follows from Lemma~\ref{Lemma: layer r-1}(c).
Using this observation, the possible choices for pivots at the time of a phase
change are easily identified, and from this information we may read off the
probability that a particular phase is visited.
The following lemma summarizes the result of this observation.

\begin{lemma}[--- The phases visited by the augmented process]
    \label{Lemma: phase process}
    Let $\sigma_1 < \sigma_2 < \dots < \sigma_N$ denote the times at which a phase change
    occurs in the augmented random process, and let $(\phi_i)_{0 \le i \le N}$
    denote the phases that are visited, i.e.,
    $
        \phi_0 = m+1,
    $ and
    $
        \phi_i = \phase(S_{\sigma_i})
    $
    ($1 \le i \le N$).
    Then we have, for $i \ge 1$:

    \begin{alphaenumerate}
        \item
            The distribution of $\phi_{i}$ is given by
            \begin{align*}
                \pr{ \phi_i = x \Given \phi_{i-1} } =
                \begin{cases}
                    \frac{ r }{ r(\phi_{i-1} - 1) + \Delta }
                    & \text{ for } x \in [\phi_{i-1} - 1],
                    \\[2ex]
                    \frac{ \Delta }{ r(\phi_{i-1} - 1) + \Delta }
                    & \text{ for } x = 0.
                \end{cases}
            \end{align*}
        \item
            If $\phi_i > 0$, then the color of the pivot at time
            $\sigma_i$ is a u.a.r.~element of $[r]$.
    \end{alphaenumerate}
\end{lemma}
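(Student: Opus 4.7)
The plan is to pin down, at every state $S$ with $\phase(S) = \phi$, exactly which pivot choices trigger a phase change and to what value, and then conclude via the Markov property. The key observation, immediate from Lemma~\ref{Lemma: layer r-1}(c) together with the definition of $\phase(S)$, is that $\phase(S)$ equals the minimum phase among the layer-$r$ points of $S$ (this is the remark right after Lemma~\ref{Lemma: layer r-1}); so tracking $\phase$ after a pivot reduces to tracking which layer-$r$ points enter and leave $S$.

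Write $I = \{i : t_i = \phi\}$, so that $\phi \ee_i \in S$ for every $i \in I$ by Lemma~\ref{Lemma: layer r-1}(c). A pivot step replaces the color-$c$ vertex $q$ of $S$ (Lemma~\ref{Lemma: colors}) by a new pivot $p$ of the same color $c$. If $p$ is non-axial (layer $< r$), then $c \notin I$ by Lemma~\ref{Lemma: layer r-1}(c); hence $\phi \ee_i \in S'$ for every $i \in I$, and since $p \notin \Ll_r$ the layer-$r$ points of $S'$ form a subset of those of $S$. Consequently $\phase(S') = \phi$ and no phase change occurs. If instead $p = x \ee_c$ is axial, then plugging $p$ into the hyperplane equation $\sum_j z_j / t_j = 1$ of $S$ forces $x < t_c$, and the layer-$r$ points of $S'$ are obtained from those of $S$ by removing at most one point (whose phase is $\ge \phi$) and adding $x \ee_c$ of phase $x$; so $\phase(S') = \min(x,\phi)$, and a phase change happens iff $x < \phi$.

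Putting these together, the phase-changing pivots of $A'$ strictly below $S$ are precisely $\{x \ee_i : i \in [r],\ x \in [\phi-1]\}$: a set of $r(\phi-1)$ points, containing exactly $r$ pivots per target phase $x$, one per color. The pivot-selection rule of $\Rr^\Delta_{r,m}$ then gives, at any single step starting from $S$, probability $r / (\abs{\text{below}(S)} + \Delta)$ of a phase change to any fixed $x \in [\phi-1]$ and probability $\Delta / (\abs{\text{below}(S)} + \Delta)$ of termination. Since these probabilities depend on $S$ only through $\phi$ once one conditions on a phase change occurring, the Markov property at time $\sigma_i$ immediately yields part (a); and conditional on $\phi_i = x > 0$, the pivot is uniformly distributed on $\{x \ee_1, \dots, x \ee_r\}$, one of each color, which gives part (b). The main obstacle in formalizing the above is the axial-pivot phase computation: the ``phase equals minimum layer-$r$ phase'' identity makes it conceptually clean, but one still has to distinguish several sub-cases (whether $c \in I$, whether the removed vertex $q$ is itself axial) to justify the formula $\phase(S') = \min(x, \phi)$ without circular reasoning.
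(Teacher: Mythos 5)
Your proposal is correct and develops exactly the argument that the paper sketches informally: the paper remarks before the lemma that $\phase(S)=\min\{t_1,\dots,t_r\}$ (a consequence of Lemma~\ref{Lemma: layer r-1}(c)) and then simply states that ``the possible choices for pivots at the time of a phase change are easily identified, and from this information we may read off the probability'' without giving a written-out proof. You supply those missing details correctly — in particular the identification of the phase-changing pivots as $\{x\ee_i : i\in[r],\ x\in[\phi-1]\}$, the observation that there are exactly $r$ of them per target phase (one per color), and the reduction to the conditional distribution at the step of the phase change, which is the same argument the paper had in mind.
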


Consider one of the phases $\phi_i$ that are visited by the augmented random
process.
We want to bound the \emph{duration} of the phase $\phi_i$, i.e.~the
number $\sigma_{i+1} - \sigma_i$, from below.
In general this duration could be very short, so we introduce a suitable
notion of a \emph{good phase}.
The definition will guarantee that, when entering a good phase, all points of
the layer $\Ll_{r-1}$ will lie strictly below the current position; and this
property will in turn make it possible to derive a lower bound on the duration
of a good phase inductively.

\begin{definition}[(good phases)]
     Let $k \in [m]$.
     We say that $k$ is a \emph{good phase} of the augmented process if,
     using the notation from Lemma~\ref{Lemma: phase process},
     \begin{romanenumerate}
         \item
             $k$ is visited, so that $k = \phi_j$ for some $j$,
         \item
             the pivot at time $\sigma_j$ has color $r$, and
         \item
             the position $S_{\sigma_j}$ does not contain any point from the
             layer $\Ll_{r-1}$.
     \end{romanenumerate}
\end{definition}

Let the reader be warned that the above definition is weaker than one might
think at first: The only points that we take directly into account are those
from the two outermost layers $\Ll_r$ and $\Ll_{r-1}$.
In particular we allow $S_{\sigma_j}$ to contain points from other layers.
When reading on, it is useful to keep in mind one consequence of
Lemma~\ref{Lemma: layer r-1}(c):
The phase of the current position can change only when pivoting a point from
the layer $\Ll_r$, i.e., a point that lies on one of the coordinate axes.
Consequently, pivots in lower layers can largely be ignored in our analysis.

\newcommand{\DeltaByM}{{\textstyle\frac{\Delta}{m}}}

\begin{lemma}
    \label{Lemma: good phase}
    Let $1 \le k \le m-1$.
    Then phase $k$ is a good phase with probability at
    least
    $
        \frac{ 1 }{ r (\Delta + kr) }.
    $
\end{lemma}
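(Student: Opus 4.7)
The plan is to decompose $\pr{G_k}$ into its three constituent conditions and bound each factor separately using Lemma~\ref{Lemma: phase process}.

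First, I would compute $q_k := \pr{\phi_j = k \text{ for some } j}$ via Lemma~\ref{Lemma: phase process}(a). The recurrence
\[
    q_{k-1} = \sum_{\phi \ge k} q_\phi \cdot \frac{r}{r(\phi-1)+\Delta}
\]
splits into the term $\phi=k$ and a remainder equal to $q_k$ itself (by applying the same recurrence at $k$), yielding $q_{k-1} = q_k \cdot \frac{rk+\Delta}{r(k-1)+\Delta}$. Telescoping from the base case $q_m = \frac{r}{rm+\Delta}$ gives $q_k = \frac{r}{rk+\Delta}$ for all $k \in [m]$. Combined with Lemma~\ref{Lemma: phase process}(b), this yields $\pr{(i) \cap (ii)} = \frac{1}{rk+\Delta}$.

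Next, I would argue that on $(i)\cap(ii)$ the pivot at $\sigma_j$ must equal $k \ee_r$ exactly. It lies in $\Cc_r \subseteq \Ll_r$, hence has the form $\ell \ee_r$ with $\ell < \phi_{j-1}$; after pivoting, the new axis-$r$ intersection is $\ell$. If $\ell \ne k$, then the new phase being $k$ would, by Lemma~\ref{Lemma: layer r-1}(c), force some new $t_i = k$ for $i \ne r$, i.e., $k \ee_i \in S_{\sigma_j}$. But the only vertex changed is the color-$r$ one, so $k \ee_i$ would also be a vertex of $S_{\sigma_j-1}$, giving $t_i(S_{\sigma_j-1}) = k < \phi_{j-1}$, contradicting $\phase(S_{\sigma_j-1}) = \phi_{j-1}$. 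Hence $\ell = k$. Since only the color-$r$ vertex is swapped, both old and new lie in $\Ll_r$, and $\Cc_r \cap \Ll_{r-1} = \emptyset$, condition (iii) reduces to $S_{\sigma_j-1} \cap \Ll_{r-1} = \emptyset$.

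Now observe that by Lemma~\ref{Lemma: phase process}, the pivot at $\sigma_j$ is drawn uniformly from a fixed block of $r(\phi_{j-1}-1)+\Delta$ options (the phase-change pivots together with the $\Delta$ weight on $\infty$), a set determined by $\phi_{j-1}$ alone and independent of $S_{\sigma_j-1}$. Hence (ii) and (iii) are conditionally independent given $\phi_{j-1}$, so $\pr{(iii) \mid (i) \cap (ii)} = \pr{(iii) \mid (i)}$.

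The main technical obstacle is then to establish $\pr{S_{\sigma_j-1} \cap \Ll_{r-1} = \emptyset \mid (i)} \ge 1/r$, which combined with the above yields $\pr{G_k} \ge \frac{1}{rk+\Delta} \cdot \frac{1}{r} = \frac{1}{r(\Delta+kr)}$. The approach would be to condition on $\phi_{j-1} = \phi$ and analyze the within-phase dynamics: an $\Ll_{r-1}$ vertex of color $i$ can enter $S$ only through a pivot of such a point, and by Lemma~\ref{Lemma: layer r-1}(b) this pivot is below $S$ only once $t_i$ has reached $t_r+1$. One would estimate the probability that no $\Ll_{r-1}$ pivot is drawn before the next phase-change pivot, comparing the state-dependent count of available $\Ll_{r-1}$ pivots below $S$ against the fixed block of $r(\phi-1)+\Delta$ phase-change options. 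The delicate combinatorial counting here, coupled with the geometric constraints of Lemma~\ref{Lemma: layer r-1}, is where the bulk of the technical work lies.
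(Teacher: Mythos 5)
Your decomposition into the three conditions (i), (ii), (iii) mirrors the paper, and your calculations $\pr{\text{(i)}} = \frac{r}{rk+\Delta}$ (via telescoping; the paper derives the same value more directly by conditioning on the first transition that brings the phase down to $\le k$) and $\pr{\text{(ii)}\mid\text{(i)}} = \frac{1}{r}$ are both correct. The observation that on (i)$\cap$(ii) the pivot must be exactly $k\ee_r$, so that (iii) reduces to $S_{\sigma_j-1}\cap\Ll_{r-1}=\emptyset$, is also sound, as is the conditional-independence remark. All of this just restates, in a slightly different order, what the paper already establishes; the actual difficulty has merely been moved into the final box.

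That final box, namely $\pr{S_{\sigma_j-1}\cap\Ll_{r-1}=\emptyset \mid \text{(i)}}\ge \frac{1}{r}$, is exactly where the proof lives, and you leave it unproved with a sketch that does not work. You propose a ``race'' between the $\Ll_{r-1}$ pivots below the current position and the fixed block of $r(\phi-1)+\Delta$ phase-changing pivots. But by Lemma~\ref{Lemma: layer r-1}(b), once $t_i\ge t_r+1$ the entire column $\Cc_i\cap\Ll_{r-1}$ (of size $m$) lies strictly below $S$, so the number of $\Ll_{r-1}$ options can be as large as $(r-1)m$, which dwarfs $r(\phi-1)+\Delta$ when $\phi$ is small; the race is lost and yields nothing like a $1/r$ bound. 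Moreover ``no $\Ll_{r-1}$ pivot is drawn during the phase'' is not even the right event: an $\Ll_{r-1}$ vertex can enter $S$ and later be replaced, and $S_{\sigma_{j-1}}$ might already contain one. The paper's argument is of a different nature. It suffices (via Lemma~\ref{Lemma: layer r-1}(a)) to bound the stronger event that $t_i(S_{\sigma_j-1})\le t_r(S_{\sigma_j-1})$ for every $i<r$. From the time $\tau$ at which the eventual color-$r$ vertex $t_r\ee_r$ first entered $S$, one tracks the set $I\subseteq[r-1]$ of ``bad'' colors with $t_i>t_r$; Lemma~\ref{Lemma: monotone} guarantees progress is never undone, and a symmetry among the $|I|+1$ colors in $I\cup\{r\}$ shows that all bad indices are eliminated before the next color-$r$ axis pivot with probability at least $\frac{1}{|I|+1}\ge\frac{1}{r}$. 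This bad-index elimination argument, not a pivot-count race, is the missing idea.
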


\begin{proof}
    Let $j$ be the (random) largest index such that $\phi_{j-1} > k$.
    Then the probability of (i) equals $\pr{ \phi_j = k }$, and
    using Lemma~\ref{Lemma: phase process}(a) we compute this probability to be
    $\frac{ r }{ rk + \Delta }$.
    Given (i), Lemma~\ref{Lemma: phase process}(b) tells us that the
    probability of (ii) equals $1/r$.

    Assume that (i) and (ii) hold.
    It remains to show that, in this case, (iii) holds with
    probability at least $1/r$.
    We consider the position $S_{\sigma_j - 1}$ one time step before entering
    phase $k$.
    Using the same notation as in Lemma~\ref{Lemma: layer r-1}, let $t_i \ee_i$
    denote the intersections of $\mathrm{conv}(S_{\sigma_j - 1})$ with the
    $i$th coordinate axis ($i = 1,\dots,r$).
    Note that $t_i > k$ for all $i$, because phase $k$ has not been entered
    yet at this time.

    By Lemma~\ref{Lemma: layer r-1} it is sufficient to give a bound for the event
    \begin{align}
        \label{eq: axis intersection property}
        t_i \le t_r \text{ for all } i = 1,\dots,r-1.
    \end{align}
    To this end, let $\tau$ denote the time that the point $t_r \ee_r$ is
    pivoted, so that $S_\tau$ is the first position to include the point $t_r
    \ee_r$.
    Note that $t_r$ does not change in between time $\tau$ and time
    $\sigma_j$; thus, if property \eqref{eq: axis intersection property}
    already holds at time $\tau$, then by monotonicity (Lemma~\ref{Lemma:
    monotone}) it will still hold at time $S_{\sigma_j - 1}$.
    So assume that at time $\tau$ property \eqref{eq: axis intersection
    property} does not yet hold, so that there are some ``bad'' indices $i$
    with $t_i > t_r$.
    Let $I \subseteq [r-1]$ denote the set of such bad indices,
    and let $\tau_1 > \tau$ be the first time that another point of layer $r$
    with phase $\le t_r$ and color contained in $I \cup \{r\}$ is pivoted.
    With probability at least $\frac{ |I| }{ |I|+1 }$, the color of this pivot
    is contained in $I$, in which case the number of bad indices is reduced by
    $1$.
    Iterating this argument, the number of bad indices will be reduced down to
    zero with probability at least
    \begin{align*}
        \frac{ |I| }{ |I|+1 } \cdot
        \frac{ |I|-1 }{ |I| } \cdots
        \frac{ 1 }{ 2}
        = \frac{ 1 }{ |I|+1 }
        \ge \frac{ 1 }{ r },
    \end{align*}
    as desired.
\end{proof}

When the augmented process enters a good phase $k$, then all the points of the
layer $\Ll_{r-1}$ lie strictly below the current position.
We restrict our attention to the hyperplane $x_r = -m^3$ that contains all the
points from $A(r,m) \setminus \Ll_r$:
Due to Lemma~\ref{Lemma: deep}, the augmented process within this hyperplane
behaves like the augmented process on the lower-dimensional construction
$A(r-1,m)$.
However, the process might at any point pivot one of the points $\{ p \in
\Ll_r \,:\, \text{phase}(p) < k \}$, and as soon as this happens, the good
phase $k$ already ends.
We can account for this by adjusting the parameter $\Delta$ and we obtain the
following lemma.

\begin{lemma}
    \label{Lemma: coupling}
    For every $1 \le k \le m-1$, if phase $k$ is visited and if it is a good
    phase, then its expected duration is bounded from below by the
    best-case\footnote{The term ``best-case'' here, as well as in
        Theorem~\ref{Theorem: augmented process bound}, refers to the action of the
        adversary who chooses the starting position of the augmented process.
        The intended meaning is for the lower bound in Theorem~\ref{Theorem:
        augmented process bound} to hold for any choice of starting position.}
    expected duration of the process $\Rr^{\Delta + (k-1)r}_{r-1,m}$.
\end{lemma}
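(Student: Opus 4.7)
My plan is to build a coupling between the steps that the augmented process $\Rr^{\Delta}_{r,m}$ performs during good phase $k$ and an entire run of the augmented process $\Rr^{\Delta+(k-1)r}_{r-1,m}$ in one dimension lower.

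First I would describe the configuration at the moment good phase $k$ is entered, i.e., at time $\sigma_j$. Condition (ii) of the definition forces the color-$r$ vertex of $S_{\sigma_j}$ to be the axis point $k\ee_r$, so $t_r(\sigma_j)=k$. Since the phase $\phi_{j-1}$ was strictly greater than $k$, no $k\ee_i$ could have been in $S_{\sigma_j-1}$ for $i\ne r$; because $S_{\sigma_j}$ and $S_{\sigma_j-1}$ agree on all color-$i$ vertices with $i\ne r$, the same holds for $S_{\sigma_j}$. Combining this with condition (iii) and monotonicity of the $t_i$ (Lemma~\ref{Lemma: monotone}), Lemma~\ref{Lemma: layer r-1}(c) forces $t_i(\sigma_j)\ge k+1$ for every $i\ne r$, which by Lemma~\ref{Lemma: layer r-1}(b) puts all of $\Ll_{r-1}$ strictly below $S_{\sigma_j}$, exactly as advertised in the paragraph preceding the lemma. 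Moreover, the vertex $k\ee_r$ remains fixed throughout good phase $k$: any pivot replacing it by some $\ell\ee_r$ would force the new phase to be $\ell$, so the only such pivots below $S$ have $\ell<k$ and end the phase.

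Next I would project onto the first $r-1$ coordinates by dropping the last coordinate. The vertex $k\ee_r$ collapses to the origin and plays no further role, while the remaining $r-1$ vertices of $S_{\sigma_j}$ project to a pierced $(r-1)$-simplex $\bar S_{\sigma_j}\subset\RR^{r-1}$. Every pivot step of the original process that keeps us inside good phase $k$ acts on a point of $\Ll_1\cup\dots\cup\Ll_{r-1}$, and projects to a pivot step on the projection of $A(r,m)\cap(\Ll_1\cup\dots\cup\Ll_{r-1})$. Lemma~\ref{Lemma: deep} (applied with $R=r$) certifies that this projected point set enjoys all the structural properties of $A(r-1,m)$, so the projected process is a genuine pivoting process on (a point set behaving like) $A(r-1,m)$.

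Finally I would identify the correct augmentation parameter. The pivots that \emph{terminate} good phase $k$ are precisely those on one of the $(k-1)r$ axis points $\{\ell\ee_i:\ell\in[k-1],\,i\in[r]\}$; by Lemma~\ref{Lemma: monotone} together with $t_i\ge k$, each of these lies strictly below $S$ at every instant of the phase. Coupling each such ending pivot with the ``escape to $v_\infty$'' action of the reduced process, and coupling the original $\Delta$ escape of $\Rr^{\Delta}_{r,m}$ with the corresponding escape in the reduced process, one checks step by step that the conditional distribution of the next move matches exactly that of $\Rr^{\Delta+(k-1)r}_{r-1,m}$ starting from $\bar S_{\sigma_j}$. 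Consequently the expected duration of good phase $k$ equals the expected duration of the reduced process started in $\bar S_{\sigma_j}$, which in turn is bounded below by the best-case expected duration of $\Rr^{\Delta+(k-1)r}_{r-1,m}$.

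The main obstacle I anticipate is the precise bookkeeping when a color-$i$ vertex of $S_{\sigma_j}$ ($i\ne r$) sits in one of the deeper layers $\Ll_1,\dots,\Ll_{r-2}$ instead of on the $i$-th axis: the projected simplex $\bar S_{\sigma_j}$ is then not of the canonical axis form used to define the adversarial starting position of $\Rr^{\Delta+(k-1)r}_{r-1,m}$. One must therefore either verify that the lower-dimensional augmented process is well-defined from such a general pierced starting simplex (so that the coupling still applies) or argue that the expected duration from any such simplex is at least the best-case duration over axis starts.
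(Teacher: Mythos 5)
Your argument reproduces the paper's own (informal, one-paragraph) justification of this lemma: project onto the first $r-1$ coordinates, invoke Lemma~\ref{Lemma: deep} to conclude that the restricted process behaves like the lower-dimensional construction, and account for the $(k-1)r$ low-phase axis points of $\Ll_r$ by incrementing the augmentation parameter $\Delta$. The paper itself offers no more formal proof than the sketch in the preceding paragraph, and your elaboration faithfully captures that sketch, including the observation that pivots on deep-layer points cannot change the phase and the identification of exactly which pivots end the phase.

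The ``obstacle'' you flag at the end is a legitimate subtlety that the paper glosses over: the augmented process $\Rr^{\Delta+(k-1)r}_{r-1,m}$ is defined only with an adversarial \emph{axis-form} starting simplex $\{\alpha_1\ee_1,\dots,\alpha_{r-1}\ee_{r-1}\}$ with $\alpha_i\ge m$, whereas the projected simplex $\bar S_{\sigma_j}$ can have vertices in deeper layers of $B$, and even its axis vertices $\ell\ee_i$ (with $k<\ell\le m$) sit below the outermost layer of the projected set $B$ rather than above it. The paper explicitly allows $S_{\sigma_j}$ to contain deep-layer points (see the warning after the definition of good phases), so the issue cannot be wished away. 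The natural repair is exactly the one you indicate: restate the bound of Theorem~\ref{Theorem: augmented process bound} so that the ``best-case'' quantifies over \emph{all} pierced starting simplices for which the full layer $\Ll_{r-1}$ lies strictly below (the axis starts being a special case), and verify that the inductive proof of the theorem only ever uses this weaker hypothesis. Your second proposed repair --- that any such simplex has expected duration at least the axis best-case --- is not obviously true as stated and would need its own argument. This does not make your approach wrong, but it means the lemma as literally written requires the first repair, which neither the paper nor your proposal spells out.

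One smaller point worth tightening: you deduce $t_i(\sigma_j)\ge k+1$ for $i\ne r$ from Lemma~\ref{Lemma: layer r-1}(c) plus the fact that $k\ee_i\notin S_{\sigma_j}$. That argument only gives $t_i>k$; the step from $t_i>k$ to $t_i\ge k+1$ is automatic when the colour-$i$ vertex of $S_{\sigma_j}$ is an axis point (then $t_i$ is an integer), but not when it sits in a deeper layer, and Lemma~\ref{Lemma: layer r-1}(b) as proved genuinely uses $t_i\ge t_r+1$. A short separate estimate is needed to cover the deep-layer case.
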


\begin{theorem}
    \label{Theorem: augmented process bound}
    Let $t^\Delta_{r,m}$ denote the best-case expected duration of the
    augmented process $\Rr^{\Delta}_{r,m}$. Then we have
    \begin{align*}
        t^\Delta_{r,m} \ge
        \frac{ 1 }{ r!^3 } \cdot
        \del{ \ln(m+ \chooseTwo{r} + \Delta) - \ln(1 + \chooseTwo{r} +\Delta) }^r.
    \end{align*}
\end{theorem}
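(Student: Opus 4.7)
The plan is to prove this bound by induction on $r$, following the template that worked for Theorem~\ref{Theorem: USOs}. The base case is $r=0$: the augmented $0$-dimensional process takes a single step to $S_\infty$, while the right-hand side evaluates to $\frac{1}{0!^3}(\cdots)^0 = 1$, so the bound holds.

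For the inductive step, I would combine Lemma~\ref{Lemma: good phase} and Lemma~\ref{Lemma: coupling}: since distinct good phases contribute disjoint time intervals, linearity of expectation gives
\[
    t^\Delta_{r,m} \ge \sum_{k=1}^{m-1} \pr{\text{phase $k$ is good}} \cdot t^{\Delta+(k-1)r}_{r-1,m} \ge \sum_{k=1}^{m-1} \frac{1}{r(\Delta + kr)} \cdot t^{\Delta+(k-1)r}_{r-1,m}.
\]
Then I would plug in the inductive hypothesis, use the identity $\chooseTwo{r-1} = \chooseTwo{r} - (r-1)$, and exploit the monotonicity of $L(y) \defeq \ln(m+y) - \ln(1+y)$ to rewrite the summand in the cleaner form $\frac{1}{r\cdot(r-1)!^3}\cdot\frac{L(z_k)^{r-1}}{z_k}$ with $z_k \defeq \chooseTwo{r} + \Delta + kr$, losing some slack via $\frac{1}{\Delta+kr} \ge \frac{1}{\chooseTwo{r}+\Delta+kr}$ and $L(\chooseTwo{r-1}+\Delta+(k-1)r) \ge L(z_k)$.

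The final step converts this sum into the claimed closed form. Since $z \mapsto L(z)^{r-1}/z$ is decreasing, a Riemann-sum comparison with step $r$ bounds the sum from below by $\frac{1}{r}\int_{z_1}^{z_m} L(z)^{r-1}/z\,dz$. The key inequality $\frac{1}{z} \ge \frac{m-1}{(1+z)(m+z)} = -L'(z)$ for $z > 0$, which reduces to the elementary $m+2z+z^2 \ge 0$, then yields
\[
    \int_{z_1}^{z_m} \frac{L(z)^{r-1}}{z}\,dz \ge \int_{z_1}^{z_m} L(z)^{r-1}(-L'(z))\,dz = \frac{L(z_1)^r - L(z_m)^r}{r}.
\]
Collecting the three factors of $r$ accrued in this single induction step (one from the good-phase probability, one from the Riemann comparison, and one from the final integration) produces the denominator $r!^3$ after $r$ iterations.

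The main obstacle I anticipate is tracking the shifts inside $L$ tightly enough to match the exact form $L(\chooseTwo{r}+\Delta)$ in the theorem, rather than a coarser $L(\chooseTwo{r}+\Delta + O(r))$ obtained from a naive application of monotonicity. Any such $O(r)$ shift still gives the stated $\Omega(\log^r n)$ asymptotics (since $L$ varies slowly near the relevant arguments when $m\to\infty$), but recovering the sharp constant may require a more careful choice of which monotonicity inequalities to invoke and in what order, or alternatively a slightly sharpened inductive hypothesis that absorbs the per-step shift cleanly.
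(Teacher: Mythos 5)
Your overall structure faithfully follows the paper: induction on $r$, the recursion $t^\Delta_{r,m}\ge\sum_{k=1}^{m-1}\frac{1}{r(\Delta+kr)}t^{\Delta+(k-1)r}_{r-1,m}$ from Lemmas~\ref{Lemma: good phase} and~\ref{Lemma: coupling}, Pascal's rule, and a sum-to-integral comparison. (The paper passes the slightly larger parameter $\Delta+kr-1$ to the recursion, which is inconsequential; your base case $r=0$ is also fine, the paper uses $r=1$.) Your Riemann-sum step and the elementary inequality $1/z\ge -L'(z)$ are both correct as far as they go.

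The problem, which you correctly anticipate, is that this route only yields roughly $\frac{1}{r!^3}\bigl(L(\chooseTwo{r}+\Delta+r)^r - L(\chooseTwo{r}+\Delta+rm)^r\bigr)$, which is strictly smaller than the claimed $\frac{1}{r!^3}L(\chooseTwo{r}+\Delta)^r$. That still gives $\Omega(\log^r n)$, but it does not prove the inequality as stated; and since the induction hypothesis is the exact statement, the argument as written is not self-consistent until the slack is removed. The slack enters in two places: (a) the lower limit of your integral is $z_1 = \chooseTwo{r}+\Delta+r$ rather than $\chooseTwo{r}+\Delta$, and (b) $L(z_m)^r>0$ is subtracted off and never goes away, because your $L(z)=\ln(m+z)-\ln(1+z)$ never vanishes.

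The paper's fix is a small change of bookkeeping in the final integration. Instead of letting both $\ln$-arguments vary, it freezes the first one to the constant $C:=m+\chooseTwo{r}+\Delta$, working with $g(x)=\ln C-\ln f(x)$ where $f(x)=1+\chooseTwo{r}+\Delta+rx-r$. Now $1/f(x)=-g'(x)/r$ holds \emph{exactly}, so $\frac{1}{rf(x)}g(x)^{r-1}$ is a perfect derivative and no analogue of your inequality $1/z\ge -L'(z)$ is needed. Two consequences: the paper restricts the integral to $x\in[1,\,1+(m-1)/r]$, which is precisely the range where $g\ge 0$, and at the upper endpoint $f=C$, so the antiderivative contributes $0$ — eliminating your subtracted tail. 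At the lower endpoint $f(1)=1+\chooseTwo{r}+\Delta$ exactly, so the constant $L(\chooseTwo{r}+\Delta)$ drops out with no $O(r)$ offset. To repair your write-up you would replace $L$ by $\tilde L(z)=\ln C-\ln(1+z)$, observe $L\ge\tilde L\ge 0$ on the truncated range, match $1/z\ge 1/(1+z)=-\tilde L'(z)$, and arrange the Riemann-sum comparison so the lower evaluation point is $1+\chooseTwo{r}+\Delta$ rather than $z_1$ — exactly what the paper's parameterization by $x$ with $f(1)=1+\chooseTwo{r}+\Delta$ accomplishes.
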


\begin{proof}
    By induction on $r$.
    For $r=1$ the statement is easy to verify; let now $r \ge 2$.
    Combining Lemmas~\ref{Lemma: good phase} and~\ref{Lemma: coupling} we
    obtain
    \begin{align}
        \label{eq: trm}
        t^\Delta_{r,m}
        &\ge
        \sum_{k=1}^{m-1}
        \frac{ 1 }{ r (kr + \Delta) }
        \cdot t^{\Delta + kr - 1 }_{r-1,m}.
    \end{align}
    The induction hypothesis gives, for $1 \le k \le m-1$,
    \begin{align}
        t^{\Delta + kr - 1 }_{r-1,m}
        &\ge
        \frac{ 1 }{ (r-1)!^3 }
        \cdot
        \del{ \ln(m + \chooseTwo{r-1} + \Delta+kr-1) - \ln(\chooseTwo{r-1} + \Delta+kr) }^{r-1}.
        \nonumber\\
        &=
        \frac{ 1 }{ (r-1)!^3 }
        \cdot
        \del[2]{ \ln(m + \chooseTwo{r} + \Delta+kr-r) - \ln( \underbrace { 1 + \chooseTwo{r} + \Delta+kr-r }_{=: f(k)} ) }^{r-1},
        \label{eq: indhyp}
    \end{align}
    where we have used Pascal's rule to handle the binomial coefficients.
    Plugging \eqref{eq: indhyp} into \eqref{eq: trm} and furthermore using the
    simple inequality $kr + \Delta \le f(k)$, we thus obtain
    \begin{align}
        t_{r,m}^\Delta
        &\ge
            \frac{ 1 }{ (r-1)!^3 }
            \sum_{k=1}^{m-1}
            \frac{ 1 }{ r f(k) }
            t^{\Delta + kr - 1 }_{r-1,m}
            \nonumber
        \\
        &\ge
            \frac{ 1 }{ (r-1)!^3 }
            \int_{x=1}^{m}
            \frac
                { 1 }
                { r f(x) }
            { \del{ \ln(m + \chooseTwo{r} + \Delta + rx-r) - \ln f(x) }^{r-1} }
            \dif{x}
            \label{eq: before drop}
        \\
        &\ge
            \frac{ 1 }{ (r-1)!^3 }
            \int_{x=1}^{ 1 + (m-1)/r }
            \frac
                { 1 }
                { r f(x) }
            { \del{ \ln(m + \chooseTwo{r} + \Delta) - \ln f(x) }^{r-1} }
            \dif{x}
            \label{eq: after drop}
        \\
        &=
            \frac{ 1 }{ r!^3 } \cdot
            \Bigl[\, -
                \del{ \ln(m + \chooseTwo{r} + \Delta) - \ln f(x) }^r
            ~\Bigr]_{x=1}^{1 + (m-1)/r }
            \nonumber
        \\
        &=
            \frac{ 1 }{ r!^3 } \cdot \del{ \ln(m+ \chooseTwo{r} + \Delta) - \ln(1 + \chooseTwo{r} +\Delta) }^r,
            \nonumber
    \end{align}
    which proves the theorem.
    Note that the integrand in \eqref{eq: before drop} is positive everywhere,
    so we were justified to restrict the range of the integral in \eqref{eq:
    after drop}, effectively dropping negligible terms.

    Theorem~\ref{Theorem: geometric lower bound} now follows from
    Theorem~\ref{Theorem: augmented process bound} by setting $\Delta = 0$.
\end{proof}

\section{Conclusion}

\subparagraph{Outlook.}
It remains an open question whether one can obtain good upper bounds for
the expected number of steps performed by \textsc{Random-Edge} when the corank
is bounded.
The only non-trivial result at this point remains the $O(\log^2 n)$ bound
for the case $r = 2$, which was settled in
\cite{GaertnerSTVW'03} and which the author has studied further in a more
abstract setting in \cite{Milatz'17}.
It might well be  that there is a threshold behavior in the sense that
\textsc{Random-Edge} performs well for slowly growing $r$, and badly for
quickly growing $r$.
Finally, the same questions can be asked for other simplex pivoting rules
that might be easier to analyze.

\subparagraph{Remark on the dependence on $r$.}
The leading factor $1 / r!^3 $ in Theorem~\ref{Theorem: augmented process
bound} is rather small.
Due to the results by Friedmann et al.~\cite{FriedmannHZ'11}, this factor
cannot in general be tight.
Unfortunately, an improvement seems to be beyond the scope of our method.
For the interpretation of Theorem~\ref{Theorem: polytopes} some readers may
find it interesting to pick a value of $r$ that depends on the number of
facets $n$.
Not every such choice of $r$ leads to a meaningful bound;
but it is possible to choose
$ r = r(n) = \ln^{1/s} \del{ 1 + 4n/r^4 } $
with any $s > 3$, which leads to a lower bound of the form
$
    (\ln n)^{ \displaystyle \Omega( \ln^{1/s} n ) }
$
for the directed random walk on a grid polytope with $n$ facets and corank $r(n)$.

\subparagraph{Acknowledgements.}
I would like to thank Bernd Gärtner, Ahad N.~Zehmakan, Jerri Nummenpalo and
Alexander Pilz for many useful suggestions and discussions.

\bibliography{p60-Milatz}


\end{document}